\newcommand{\Rb}{\mathbb{R}}
\newcommand{\Zb}{\mathbb{Z}}
\newcommand{\SO}{\mathrm{SO}}
\newcommand{\Spin}{\mathrm{Spin}}
\newcommand{\diag}{\mathrm{diag}}
\newcommand{\red}{\mathrm{red}}
\newcommand{\blue}{\mathrm{blue}}
\newcommand{\green}{\mathrm{green}}
\newtheorem{theo}{Tplottin ubuntuheorem}
\theoremstyle{plain}
\newtheorem{thm}[theo]{Theorem}
\newtheorem{lem}[theo]{Lemma}
\newtheorem{prop}[theo]{Proposition}
\newtheorem*{thm*}{Theorem}
\newtheorem*{lem*}{Lemma}
\newtheorem*{prop*}{Proposition}
\newtheorem*{cor*}{Corollary}
\theoremstyle{definition}
\newtheorem{defn}[theo]{Definition}
\date{\today}
\begin{document}
\title{Topologically protected vortex knots and links}
\author{Toni Annala}\email{toni.annala@aalto.fi}
\affiliation{QCD Labs, QTF Centre of Excellence, Department of Applied Physics, Aalto University, P.O. Box 13500, FI-00076 Aalto, Finland}
\affiliation{Department of Mathematics, University of British Columbia, 1984 Mathematics Rd, Vancouver, BC V6T 1Z2, Canada}
\author{Roberto Zamora-Zamora}
\affiliation{QCD Labs, QTF Centre of Excellence, Department of Applied Physics, Aalto University, P.O. Box 13500, FI-00076 Aalto, Finland}
\author{Mikko M{\"o}tt{\"o}nen}
\affiliation{QCD Labs, QTF Centre of Excellence, Department of Applied Physics, Aalto University, P.O. Box 13500, FI-00076 Aalto, Finland}
\date{\today}

\begin{abstract}\textbf{
In 1869, Lord Kelvin found that the way vortices are knotted and linked in an ideal fluid remains unchanged in evolution, and consequently hypothesized atoms to be knotted vortices in a ubiquitous ether, different knotting types corresponding to different types of atoms\cite{thomson:1869}. Even though Kelvin's atomic theory turned out incorrect, it inspired several seminal results, such as the mathematical theory of knots \cite{atiyah:1990,rolfsen:2003} and the investigation of knotted structures that naturally arise in physics\cite{faddeev:1997}. However, in previous studies, knotted and linked structures have been found to untie via local cut-and-paste events referred to as \emph{reconnections}\cite{cirtain:2013,kleckner:2016}. Here, in contrast, we construct knots and links of non-Abelian vortices that are \emph{topologically protected} in the sense that they cannot be dissolved employing local reconnections and strand crossings. We prove such topological stability by introducing an invariant of links colored by the elements of the quaternion group $Q_8$. Importantly, the topologically protected links are supported by a variety of physical systems such as dilute Bose--Einstein condensates. Thus, our results predict the existence of topologically stable knotted structures in condensed-matter systems. To the best of our knowledge, this kind of stability has not been found previously.  We expect our results to lead to discoveries of analogous linked and knotted structures in other physical systems including, in addition to ultracold atomic gases, at least liquid crystals and cosmology. Interestingly, we also propose a classification scheme for topological vortex links, in which two structures are considered equivalent if they  differ from each other by a sequence of topologically allowed reconnections and strand crossings, in addition to the typical continuous transformations. This leads to a simpler classification than the schemes in which the core topology is constant\cite{machon:2014}: for example, there are essentially only three types of nontrivial $Q_8$-colored links. In the future, the detailed study of these novel structures calls for new mathematical tools, which we expect to lead to advances in knot theory and in low-dimensional topology.
}\end{abstract}

\maketitle

Mathematically, knots and links are closed loops and configurations thereof  in a three-dimensional space, respectively\cite{rolfsen:2003}. There are numerous physical systems that support knotted structures. Examples include the disclination lines of liquid crystals \cite{smalyukh:2009,tkalec:2011,sec:2014}, the cores of vortices in water\cite{kleckner:2013} and in superfluids\cite{kleckner:2016}, strands of DNA\cite{han:2010}, Skyrmion cores in classical field theory \cite{faddeev:1997}, and the quantum knot in the polar phase of a spin-1 Bose--Einstein condensate \cite{hall:2016}. Additionally, there are deep connections between the mathematical theory of knots and statistical mechanics, topological quantum computing, and quantum field theories\cite{jones:1987,witten:1989,kauffman:1994}. In mathematics, knots play a significant role, for example, in the surgery theory of three-dimensional manifolds\cite{prasolov:1996}.


A non-trivially knotted loop tied from a physical string, or more generally a link tied from several loops, cannot be untied without cutting at least one string; the knot and link structures are robust. Similarly, knots and links tied from vortex lines in a frictionless ideal fluid remain forever knotted\cite{thomson:1869}, leading to a conserved quantity, \emph{helicity}\cite{moffatt:1969}, which measures the total knottedness and linkedness of a vortex configuration. However, it has been observed that even a small amount of dissipation is enough to cause spontaneous untying of knotted vortices owing to local reconnection events \cite{cirtain:2013, kleckner:2016}. In addition, another local modification, namely strand crossing\cite{poenaru:1977}, may promote the untying process of vortex knots. These observations naturally lead to the following fundamental question: is it possible to tie a vortex knot, or a link, for which decay through local reconnection events and strand crossings is prohibited by the fundamental properties of the knot-supporting substance?


Here, we propose a class of knotted structures, tied from non-Abelian \emph{topological vortices}, that have exactly this property. We refer to such a structure as being \emph{topologically protected} against untying. Topological vortices are codimension-two defects in an ordered medium where the winding of the order parameter field about the vortex core corresponds to a non-trivial element of the \emph{fundamental group} $\pi_1$. The elements of the fundamental group $\pi_1(M,m)$ correspond to oriented loops in the order parameter space\cite{mermin:1979} $M$ that begin and end at the \emph{basepoint} $m \in M$, considered up to continuous deformations that keep the endpoints fixed. Here, the group law is provided by the concatenation of loops. The basepoint $m$ is often omitted from the notation. 

Topological vortices are non-Abelian if the fundamental group $\pi_1(M)$ is a non-Abelian group, such as the quaternion group $Q_8 = \{\pm 1, \pm i, \pm j, \pm k\}$, the group law of which is governed by the multiplication rule of quaternions. Non-Abelian vortices are known to exhibit peculiar behavior, when they interact with each other. For instance, two vortices corresponding to non-commuting elements in $\pi_1(M)$ cannot strand cross, i.e., freely pass through each other. Instead, one concludes on topological grounds alone that a \emph{vortex chord}, corresponding to the commutator of the crossing vortices in $\pi_1(M)$, forms to connect them\cite{poenaru:1977}. Precisely such phenomena are responsible for the robustness of the structures discovered in this work.

We begin by identifying topological vortex configurations for certain, physically relevant, order parameter spaces, with colored link diagrams (Fig.~\ref{fig:diagex}). Then, we identify realistic rules governing the core-topology-altering evolution of such structures. In the special case of $Q_8$-colored link diagrams, which describe topological vortex configurations in certain physical systems, the colored link diagrams and the evolution rules admit a simple graphical depiction (Fig.~\ref{fig:rules}). Using invariants of $Q_8$-colored links, we identify examples of linked structures that are robust against local reconnections and strand crossings (Fig.~\ref{fig:defects}). Interestingly, we also classify all the $Q_8$-colored links up to reconnections and strand crossings, and find a topologically stable knot for a fundamental group of permutations. 

A configuration of topological vortices is formalized as follows. The spatial extent of the ordered medium is modeled by the three-dimensional euclidean space $\Rb^3$. Homotopy theoretically, it makes no difference if $\Rb^3$ is replaced by the unit ball in $\Rb^3$. The cores of the vortices, i.e., the location where the order parameter is not well defined, form a subset $L \subset \Rb^3$ consisting of loops. In other words, the collection of cores forms a link in $\Rb^3$. The order parameter field is modeled by a continuous map $\Psi: \Rb^3 \backslash L \to M$, where $M$ is the order parameter space. The induced homomorphism between the fundamental groups $\psi: \pi_1(\Rb^3 \backslash L, x_0) \to \pi_1[M, \Psi(x_0)]$ may be described by a $\pi_1(M)$-colored link diagram as illustrated in Fig.~\ref{fig:diagex}.  If the second homotopy group $\pi_2[M, \Psi(x_0)]$, i.e., the group defined analogously to the fundamental group but with spheres instead of loops, is trivial, then the group homomorphism $\psi$ retains all the information about the homotopy class\cite{ang:2018,ang:2018b} of $\Psi$. In fact, the homotopy classes of continuous maps $\Rb^3 \backslash L \to M$ are in one-to-one correspondence with group homomorphisms $\pi_1(\Rb^3 \backslash L) \to \pi_1(M)$ up to ``change of coordinates'' (up to \emph{conjugacy}): homomorphisms $\psi, \psi': \pi_1(\Rb^3 \backslash L) \to \pi_1(M)$ correspond to the same homotopy class if and only if there exists $h \in \pi_1(M)$ such that, for all $g \in \pi_1(\Rb^3 \backslash L)$, $\psi'(g) = h \psi(g) h^{-1}$. In terms of $\pi_1(M)$-colored link diagrams, this relation considers two colored diagrams equivalent if one of them can be obtained from the other by replacing all $g_i$ with $h g_i h^{-1}$. The rules for $Q_8$-colored link diagrams admit a purely graphical presentation, as illustrated in Fig.~\ref{fig:rules}a, b.

For the stability of such structures, we make the following assumption: the core topology changes only in topologically allowed local reconnections and in topologically allowed strand crossings. In particular, we do not account for the possibility of a vortex spontaneously splitting into two vortices, or a vortex ring shrinking into a point defect. We justify this by restricting our attention to topological vortices for which such events are, for example, energetically unfavorable. Furthermore, it turns out that a decay through vortex splitting requires it to take place in a region extending from one undercrossing to another, which is unlikely provided that there are no intrinsic mechanisms driving the vortex splitting. 

A vortex reconnection  is topologically allowed if it does not lead to discontinuities in the coloring. A strand crossing is topologically allowed\cite{poenaru:1977} if the two strands correspond to commuting elements in $\pi_1(M)$. It is a consequence of the Wirtinger relations\cite{rolfsen:2003} that this is equivalent to assuming that the coloring does not change in the undercrossing. Hence, strand crossing is allowed only if no discontinuities in the coloring occur after the crossing has taken place. For $Q_8$-colored link diagrams, the local reconnection and strand-crossing rules admit simple description in terms of the graphical presentation mentioned above (Fig.~\ref{fig:rules}c).

Let us consider physical systems that consist of the cyclic or biaxial nematic phases of a spin-2 Bose--Einstein condensate, or of the biaxial nematic phase of a liquid crystal. The corresponding order parameter spaces---$M_\mathrm{C}, M_\mathrm{BN}$, and $M_\mathrm{BNLQ}$, respectively---have trivial second homotopy groups; importantly, the fundamental group of $M_\mathrm{BNLQ}$ is $Q_8$, whereas in $\pi_1(M_\mathrm{C})$ and $\pi_1(M_\mathrm{BN})$, $Q_8$ is the subgroup corresponding to topological vortices with no scalar-phase winding about the core (Supplementary Information). Hence, in each case, at least some of the topological vortex configurations admit descriptions as $Q_8$-colored link diagrams. Moreover, the evolution of these structures under strand crossings and local reconnections may be analyzed using the rules outlined in Fig.~\ref{fig:rules}c.

Next, we establish the the topological stability of the knotted structures by means of invariants of $Q_8$-colored links. The \emph{linking invariant} $l$ is an element of $\Zb_2$ that is the number of times a blue strand crosses over a red strand, modulo 2. Equivalently, it is the total Gauss linking number\cite{rolfsen:2003} between the red and the blue loops, modulo 2. In order for the bicoloring flips to lead to a consistent bicoloring, each red loop is overcrossed an even number of times by a strand of either green or blue color. Hence, the number of times a red loop is crossed over by a blue strand is equivalent, modulo 2, to the number of times it is crossed over by a green strand. In fact, one argues in a similar fashion that the linking invariant $l$ is independent of the pair of colors used to compute it. The invariant is conserved in topologically allowed local reconnections and strand crossings (Fig.~\ref{fig:rules}c), since these modifications do not alter crossings that are relevant for $l$. For a diagram of disjoint and untangled loops, this invariant is clearly zero. Hence, a configuration with a non-trivial linking invariant cannot be unknotted by local reconnections and strand crossings. 

In the Supplementary Information, we define the \emph{$Q$-invariant} which is a $\Zb_4$-valued invariant that recovers the linking invariant $l$ when reduced to modulo 2. It is also conserved in strand crossings and local reconnections, and therefore can be used as a marker of topological protection. Remarkably, up to topologically allowed local reconnections and strand crossings, there are only six non-trivial $Q_8$-colored links, two for each value $Q= [1], [2]$, and $[3]$, respectively (Fig.~\ref{fig:defects}a). Moreover, there are $2^4 = 16$ defects that have $Q=[0]$, each of which is equivalent to a disjoint union of loops of a subset of the four possible colors. Examples of topologically protected and unprotected vortex configurations are illustrated in Fig.~\ref{fig:defects}a, b.

These results on the invariants imply that a system, the fundamental group of which is described by the quaternion group, cannot support topologically protected vortices composed of a single loop. At least three different colors are needed for stability. Consequently, we have studied the permutation group $S_3$ and found that a simple trefoil knot is indeed topologically protected as illustrated in Fig.~\ref{fig:defects}c. The detailed study of this exciting structure and its realizations in physical systems is left for future work.

We expect our results to inspire a wide range of theoretical and experimental investigations. In addition to observing experimentally the proposed vortex links in nematic liquid crystals or spin-2 Bose--Einstein condensates, analogous topologically protected vortex structures can be studied both theoretically and experimentally in a wide variety of condensed-matter systems, which may stimulate the development of invariants for other types of colored links. It would be especially interesting to find a tangled vortex structure that cannot be untied even in the presence of vortex splittings which may break the links considered here as is illustrated in Extended Data Figure 7. Another interesting aspect of our work is the simplicity of the classification: instead of the infinitude of different link types, there is only a small number of $Q_8$-colored links up to strand crossings and reconnections. It would be interesting to obtain similar classifications for physically relevant groups other than the quaternion group $Q_8$. In future work, we aim to use numerical simulations to investigate the dynamics of the structures proposed in this article to verify their stability properties and provide insight how they can be prepared and observed in experiments.


\bibliographystyle{naturemag}
\bibliography{references}

\begin{figure}[h!]
\includegraphics[scale=1.5]{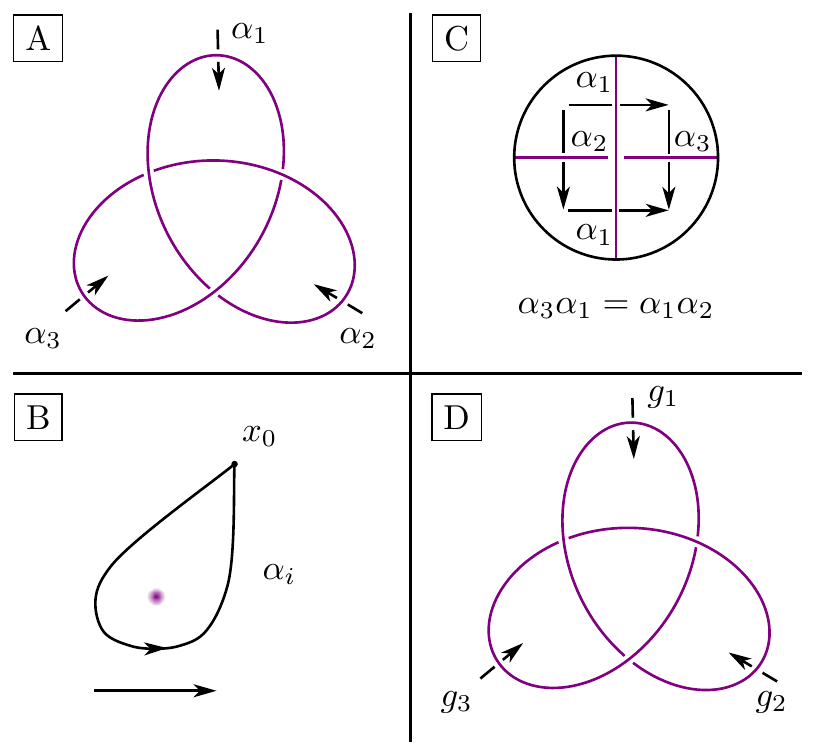}
\caption{\textbf{Wirtinger presentation of the fundamental group of the link complement and colored link diagrams.}
\textbf{a}, \textbf{b}, Link complement $\Rb^3 \backslash L$ (area outside the purple line which denotes $L$) and loops denoted by the arrows. After fixing a basepoint $x_0 \in \Rb^3 \backslash L$ above the plane of the picture \textbf{a}, each arrow describes a loop in $\Rb^3 \backslash L$ based at $x_0$, the homotopy equivalence class of which is denoted by $\alpha_i$ (\textbf{b}). Every loop that can be continuously deformed into each other without passing through $L$ and always fixed at $x_0$ are equivalent. \textbf{c}, Each crossing of the link diagram in \textbf{a} corresponds to a relation in the fundamental group $\pi_1(\Rb^3 \backslash L, x_0)=\{\alpha_i\}$, one of which is presented in \textbf{c}. The \emph{Wirtinger presentation} of $\pi_1(\Rb^3 \backslash L, x_0)$ describes it as the group, the elements of which are words on symbols $\alpha_i$ and $\alpha^{-1}_i$, modulo the relations associated to the crossings of the diagram, with the group operation corresponding to concatenation of words. 
\textbf{d}, Group homomorphism $\psi: \pi_1(\Rb^3\backslash L, x_0) \to G$ may be described by specifying the images $g_i := \psi(\alpha_i)$ as shown above. Such a picture is regarded as a \emph{($G$-)colored link diagram}. In order for the colored diagram to correspond to a well-defined homomorphism, the elements $g_i$ must satisfy the analogue of the Wirtinger relations, e.g., $g_3 g_1 = g_1 g_2$ in \textbf{d}.
}\label{fig:diagex}
\end{figure}

\begin{figure}[h!]
\includegraphics[scale=1]{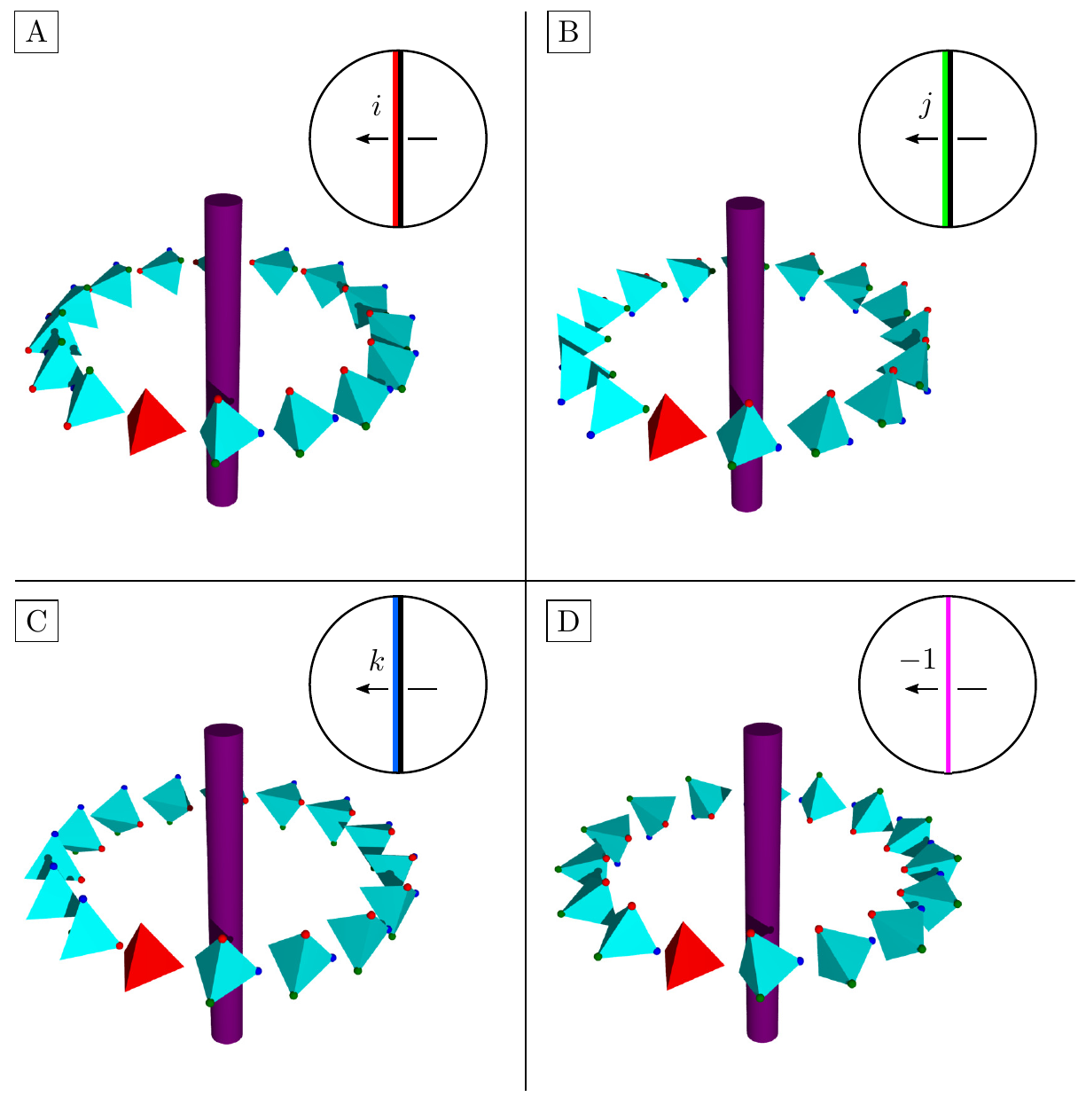}
\caption{
\textbf{Diagrams for $Q_8$-colored links, I.}
Instead of labeling the arcs of a $Q_8$-colored link diagram by arrows and elements of $Q_8$, this information is expressed graphically in terms of three types of bicolored arcs (\textbf{a-c}), and one type of colored arc (\textbf{d}). An example of the winding of the tetrahedral order parameter field about the core of a topological vortex corresponding to each type of (bi)color is depicted as well. The colored spheres at the vertices of the tetrahedra are not physical data; their purpose is to clarify the winding of the tetrahedral field. The basepoint is marked by a red tetrahedron, and the winding direction is anticlockwise. Clockwise winding corresponds to the inverse element in the group $Q_8$.
}\label{fig:rules}
\end{figure}

\begin{figure}[h!]
\includegraphics[scale=1.75]{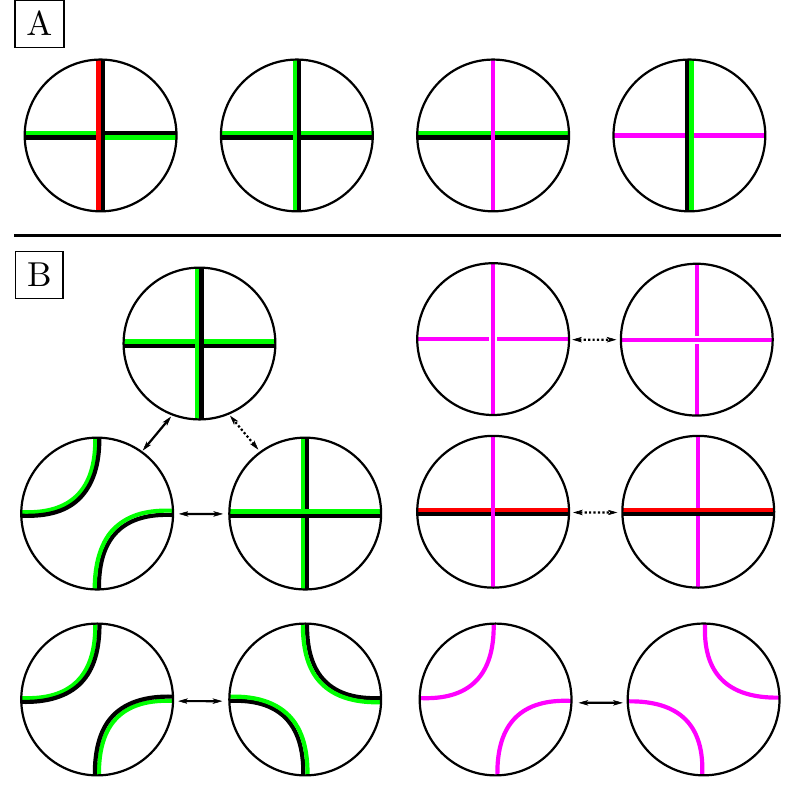}
\caption{
\textbf{Diagrams for $Q_8$-colored links, II.}
\textbf{a}, In a $Q_8$-colored diagram, the direction of the bicoloring is flipped when passing under a strand of a different color, unless it is purple. When passing under a strand of the same color, the bicoloring remains unchanged.
\textbf{b}, Core-topology-altering local modifications. The modifications marked by dashed arrows are strand crossings, which can occur only between strands of the same color, or if at least one of the strands is purple\cite{poenaru:1977}. The other modifications are local reconnections. They can also take place only between strands of the same color. Moreover, they are further restricted by the continuity of the bicoloring away from the undercrossings. 
}\label{fig:rules}
\end{figure}

\begin{figure}[h!]
\includegraphics[scale=1.25]{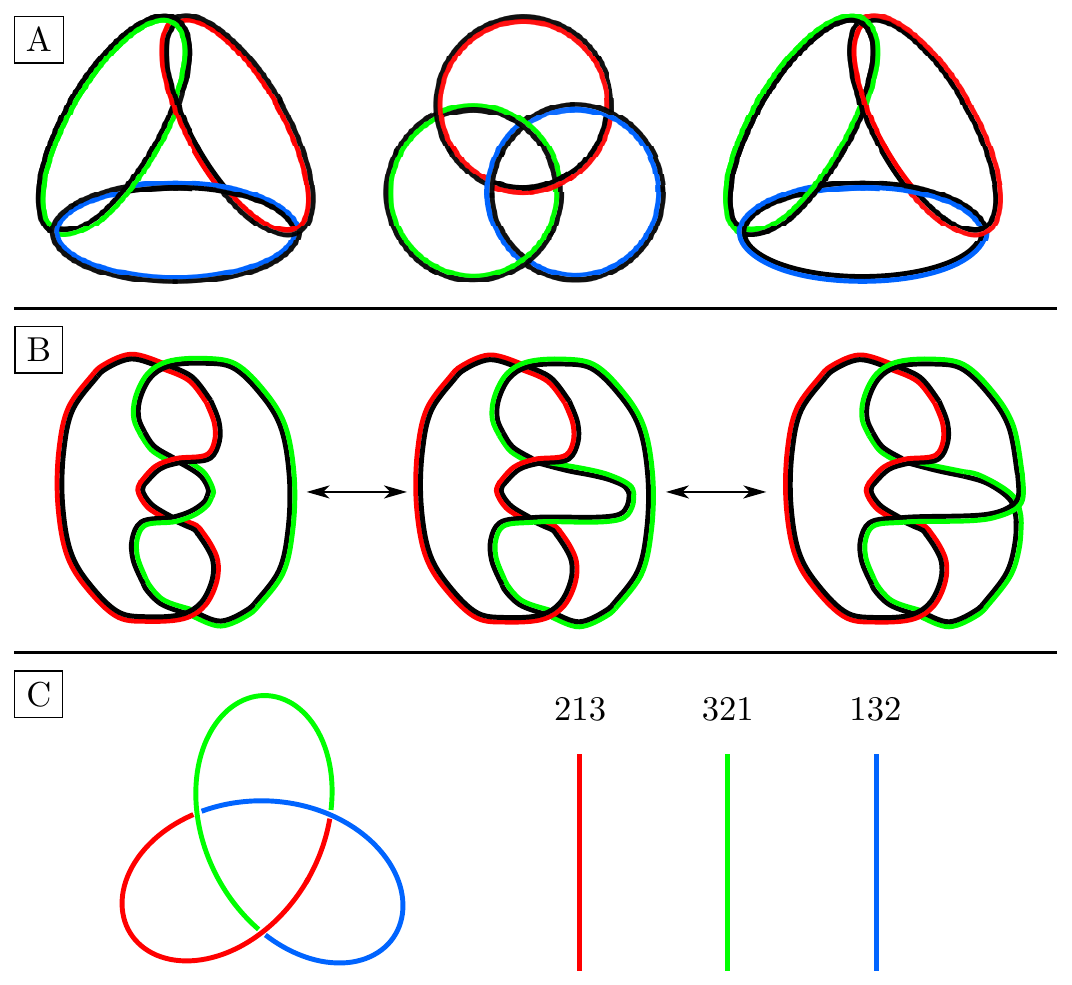}
\caption{
\textbf{Examples of topologically protected and unprotected colored knots and links.}
\textbf{a}, Three non-trivial $Q_8$-colored links and the values of the linking invariant $l$ and the $Q$-invariant for each of them. Interestingly, the topological protection the defect at the center, which is a $Q_8$-colored version of the Borromean rings, cannot be detected using the linking invariant $l$. It is established in the supplementary material that, up to topologically allowed reconnections and strand crossings, each $Q_8$-colored link with a nontrivial $Q$-invariant is equivalent to either one of the defects depicted above, or the disjoint union of one of the above defects, and a purple loop.
\textbf{b}, $Q_8$-colored link that is not topologically protected. Even though no nontrivial modifications can be applied on the crossings, the link may be unlinked by deforming one of the arcs, and then performing a local reconnection. The end result, the rightmost link diagram, is an unlink. Accordingly, the link has trivial invariants $l$ and $Q$. Note that a red loop and a blue loop could not be form the Hopf link because such a configuration does not admit a valid bicoloring. 
\textbf{c}, Topologically protected knot colored by the elements of the permutation group $S_3$. Each color corresponds to one of the three transpositions, as indicated. Transpositions are their own inverses, and therefore the direction does not matter. 
}\label{fig:defects}
\end{figure}

\appendix
 
 \
 
\textbf{Acknowledgements} We have received funding from the European Research Council under Grant No 681311 (QUESS), from the Academy of Finland Centre of Excellence program (project 336810), and from the Vilho, Yrj\"o and Kalle V\"ais\"al\"a Foundation of the Finnish Academy of Science and Letters. 

\textbf{Author contributions} The theoretical work was carried out by T.A. with input from M.M. and R.Z.Z. M.M. supervised the work. All authors discussed the theoretical results and commented on the manuscript.

\section*{Supplementary information}

\subsection{Order parameter spaces for spin-2 biaxial nematic and cyclic phases}

Here, we study the order parameter spaces of spin-2 biaxial nematic and cyclic phases, and identify the quaternion group $Q_8$ as the subgroup of the fundamental group corresponding to topological vortices with no scalar-phase winding about the core.

We begin by analyzing the order parameter space of the spin-2 biaxial nematic phase $M_\mathrm{BN}$. It is known that $M_\mathrm{BN}\cong [S^1 \times \SO(3)] / D_4$, where the $S^1$ accounts for the scalar complex phase, and $D_4$ is the symmetry group of a square lying in the $xy$-plane\cite{song:2007, turner:2007}. The dihedral group $D_4$ is realized as a subgroup of $S^1 \times \SO(3)$ in such a way that the $90^\circ$ and the $270^\circ$ rotations along the $z$-axis, and the $180^\circ$ rotations along $x+y$ and the $x-y$ axis are supplemented with a phase shift by $\pi$. In other words, the elements of the Klein four-group $K_4$, corresponding to the subgroup of $180^\circ$ rotations along the $x$-, $y$- and the $z$-axis as well as the identity, are not accompanied by phase shifts, and therefore $[S^1 \times \SO(3)] / K_4 \cong S^1 \times [\SO(3) / K_4]$. Since the inverse image of $K_4$ under the two-fold covering $\Spin(3) \to \SO(3)$ is the quaternion group $Q_8$, we deduce that $\pi_1(S^1 \times [\SO(3) / K_4]) \cong \Zb \times  Q_8$. Moreover, since $D_4/K_4 \cong \Zb_2$, the order parameter space $M_\mathrm{BN}$ is homeomorphic to the quotient $\big\{ S^1 \times [\SO(3) / K_4] \big\} / \Zb_2$. Applying the long exact homotopy sequence\cite{hatcher:2002}, we obtain a short exact sequence of groups
\[1 \to \pi_1(S^1 \times [\SO(3) / K_4]) \to \pi_1(M_\mathrm{BN}) \to \Zb_2 \to 1\]
leading us to conclude that $\Zb \times  Q_8$ is a subgroup of $\pi_1(M_\mathrm{BN})$. The elements of $\pi_1(M_\mathrm{BN})$ that do not belong to $\Zb \times Q_8$ correspond to paths between different points in a $\Zb_2$-orbit of $S^1 \times [\SO(3) / K_4]$\cite{hatcher:2002}. As the $\Zb_2$-action identifies points that have scalar phase $\alpha$ with points that have scalar phase $\alpha + \pi$, no path connecting such a pair has phase winding that is an integer multiple of $2\pi$. Therefore, the subgroup $\Zb \times Q_8 \subset \pi_1(M_\mathrm{BN})$ corresponds to those vortices that have integer phase winding, and $\{0\} \times  Q_8 \subset \pi_1(M_\mathrm{BN})$ corresponds to exactly those vortices that have no scalar phase winding.

Next, we analyze the order parameter space of the spin-2 cyclic phase in detail. The order parameter space $M_\mathrm{C}$ is $[S^1 \times \SO(3)] / T$, where the $S^1$, again, accounts for the complex phase, and where $T$ is the group of rotational symmetries of a tetrahedron\cite{semenoff:2007}. To realize it as a subgroup of $S^1 \times \SO(3)$, we employ the presentation of $\SO(3)$ as the group of $3 \times 3$ orthogonal matrices with determinant $1$. Then, the elements corresponding to the Klein four-group $K_4 \subset T$, namely $I, \diag(-1,-1,1), \diag(-1,1,-1)$ and $\diag(1,-1,-1)$ are not accompanied by any phase shifts, and the tetrahedral group is generated by the above elements in $\{0\} \times \SO(3) \subset S^1 \times \SO(3)$, and by
\[
\Bigg(\frac{2 \pi}{3}, 
\begin{bmatrix}
0 & 1 & 0 \\
0 & 0 & 1 \\
1 & 0 & 0
\end{bmatrix}
\Bigg) \in S^1 \times \SO(3).
\]
As in the previous paragraph, $[S^1 \times \SO(3)] / K_4 \cong S^1 \times [\SO(3) / K_4]$ and its fundamental group is isomorphic to $\Zb \times Q_8$. The Klein four-group is a normal subgroup of $T$ of index 3 and the quotient $T / K_4$ is the cyclic group $\Zb_3$. Applying the long exact homotopy sequence to the three-fold covering $S^1 \times [\SO(3) / K_4] \to M_\mathrm{C}$\cite{hatcher:2002}, we obtain a short exact sequence of groups
\[1 \to \pi_1(S^1 \times [SO(3) / K_4]) \to \pi_1(M_\mathrm{C}) \to \Zb_3 \to 1.\]
 The elements of $\pi_1(M_\mathrm{C})$ that do not belong to $\Zb \times Q_8$ correspond to paths between different points in a $\Zb_3$-orbit of $S^1 \times [\SO(3) / K_4]$\cite{hatcher:2002}. As the $\Zb_3$-action identifies points that have scalar phase $\alpha$ with points that have scalar phase $\alpha + 2\pi/3$ and $\alpha + 4 \pi/3$, no path connecting such a pair has phase winding that is an integer multiple of $2\pi$. Therefore, the subgroup $\Zb \times Q_8 \subset \pi_1(M_\mathrm{C})$ corresponds to those vortices that have integer phase winding, and $\{0\} \times  Q_8 \subset \pi_1(M_\mathrm{C})$ corresponds to exactly those vortices that have no scalar phase winding.

\subsection{The $Q$-invariant and the classification of $Q_8$-colored links}

Here, we define the $Q$-invariant of $Q_8$-colored links, and use it to study their classification. The invariant is $\Zb_4$-valued, and recovers the linking invariant $l$ when reduced to modulo 2. The definition of $Q$ requires focusing on loops of either red, green, or blue color, and therefore one ends up with three colored invariants $Q_\red, Q_\green$, and $Q_\blue$, the equivalence of which is established later. After defining the colored invariants of a $Q_8$-colored link diagram, we establish their independence from the specific diagram chosen to present a $Q_8$-colored link, after which we prove that these invariants are conserved in topologically allowed strand crossings and reconnections. Subsequently, we employ the classification of three-component links to classify $Q_8$-colored links up to strand crossings and reconnections and establish the equivalence of the three colored invariants. The $Q$-invariant is thus defined as the value of any of the colored invariants. Moreover, we establish a relationship of $Q$ with Milnor's triple linking number.

Suppose $L$ is a $Q_8$-colored link represented by a $Q_8$-colored link diagram. Let $L_1,...,L_r$ be the components of $L$ enumerated in some order. For each non-purple $L_i$ choose a \emph{basepoint} $b_i$ at one of the arcs belonging to the loop $L_i$ in the diagram. We will use the following orientation convention for bicolored loops with a basepoint: the loop $L_i$ is oriented in such a way that when moving from the basepoint $b_i$ according to the orientation, the black color of the bicoloring is on the right. It will not be necessary to choose basepoints or specify orientations for the purple loops.


\begin{defn}\label{def:ptdalpha}
Let $(L_i,b_i)$ be a pointed loop of color $c \in \{\red, \green, \blue \}$ in a $Q_8$-colored link diagram. We define the $\alpha$-invariant of $(L_i,b_i)$ as
\begin{equation}\label{eq:ptdalpha}
\alpha(L,b) := \chi_c^{-\omega} q,
\end{equation}
where $\chi_c$ is $i$, $j$ or $k$ if $c$ is $\red$, $\green$ or $\blue$, respectively; $\omega$ is the \emph{self-writhe} of the loop $L$, i.e., the signed count of self crossings of $L$ in the diagram, where the sign of a crossing is decided using the right-hand rule, see Extended Data Figure 8; and $q \in Q_8$ is obtained by multiplying the quaternions, corresponding to the crossings of $L_i$ under non-purple strands, in order from right to left, when $L_i$ is traversed from the basepoint $b_i$ according to the orientation specified above. Concrete examples are presented in Extended Data Figure 1.
\end{defn}

Let us record the following useful properties of the $\alpha$-invariant.

\begin{lem}\label{lem:alpha}
The $\alpha$-invariant satisfies the following properties:
\begin{enumerate}
\item $\alpha(L_i,b_i)$ is a power of $\chi_c$;
\item $\alpha(L_i,b_i)$ does not depend on the basepoint $b_i$.
\end{enumerate}
\end{lem}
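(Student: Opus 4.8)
The plan is to prove both statements by examining how the quaternion product $q$ built from the undercrossings transforms, using the defining coloring (Wirtinger-type) relations of the diagram together with the structure of $Q_8$.

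For part (1), I would argue as follows. As the loop $L_i$ of color $c$ is traversed, each time it passes under a non-purple strand its color is conjugated by the coloring element $g$ of the overstrand: the new color is $g^{-1} c g$ (up to inverse, depending on crossing sign), in accordance with the Wirtinger relation displayed in Fig.~\ref{fig:diagex}d. Since bicolored loops always carry colors in $\{\pm i,\pm j,\pm k\}$ and the bicoloring is consistent, at the basepoint the color is $\chi_c$ and after the full traversal it returns to $\chi_c$ (this is the consistency condition on the bicoloring). Now the accumulated element $q$ is precisely the product of the conjugating elements, and the relation ``color at the end $=$ color at the start'' forces $q^{-1}\chi_c q = \chi_c$, i.e.\ $q$ commutes with $\chi_c$. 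In $Q_8$, the centralizer of $\chi_c$ (for $\chi_c$ of order $4$) is the cyclic subgroup $\langle \chi_c\rangle=\{\pm 1,\pm\chi_c\}$. Hence $q\in\langle\chi_c\rangle$, and therefore $\alpha(L_i,b_i)=\chi_c^{-\omega}q$ is also a power of $\chi_c$. I would want to be a little careful about crossing signs (whether a given undercrossing contributes $g$ or $g^{-1}$), but since $g^{-1}$ is also a power of $\chi$-type generators and the centralizer argument is insensitive to this, the conclusion is unaffected.

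For part (2), I would compare the value computed from $b_i$ with the value computed from another basepoint $b_i'$ on the same loop. Moving the basepoint from $b_i$ to $b_i'$ cyclically permutes which undercrossings are read ``first'': if $q = q_2 q_1$ where $q_1$ collects the crossings from $b_i$ up to $b_i'$ and $q_2$ the remaining crossings from $b_i'$ back around to $b_i$, then the product computed from $b_i'$ is $q' = q_1 q_2$. These are conjugate, $q' = q_1 q q_1^{-1}$, but by part (1) $q\in\langle\chi_c\rangle$ which is abelian and central among powers of $\chi_c$; more to the point, $q_1$ conjugates $\chi_c$ to the color at $b_i'$, and one checks that the self-writhe $\omega$ is a property of the loop and its diagram, not of the basepoint. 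One subtlety: the orientation convention (black on the right) could in principle depend on the basepoint if the bicoloring pattern were inconsistent, but consistency of the bicoloring guarantees a well-defined orientation, so $\omega$ is unambiguous. Combining, $\alpha(L_i,b_i') = \chi_c^{-\omega} q' = \chi_c^{-\omega} q_1 q q_1^{-1}$, and since $q$ is a power of $\chi_c$ and $q_1$ either centralizes $\chi_c$ or sends it to $-\chi_c$ (the only possibilities within the bicoloring), conjugation fixes every power of $\chi_c$ (note $(-\chi_c)^{-1}\chi_c^m(-\chi_c) = \chi_c^m$ since $\langle\chi_c\rangle$ is abelian). Hence $\alpha(L_i,b_i') = \alpha(L_i,b_i)$.

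The main obstacle I anticipate is bookkeeping of signs and conventions rather than any conceptual difficulty: precisely matching the crossing-sign conventions in Definition~\ref{def:ptdalpha} and Extended Data Figure~8 to the direction of conjugation in the Wirtinger relations, and verifying that the self-writhe term $\chi_c^{-\omega}$ is exactly what is needed for basepoint-independence (as opposed to, say, needing the ordinary writhe or a differently normalized count). I would handle this by fixing one explicit local model for a positive undercrossing, reading off the conjugation rule there, and then checking the two or three cases in Fig.~\ref{fig:rules} against it; once the local rule is pinned down, both parts follow from the centralizer computation in $Q_8$ as sketched above.
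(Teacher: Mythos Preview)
Your argument for part (1) is correct and, in fact, slightly cleaner than the paper's: the paper counts parities of off-color factors and reorders, whereas you observe directly that the Wirtinger relations force $q$ to centralize $\chi_c$, and the centralizer of $\chi_c$ in $Q_8$ is $\langle\chi_c\rangle$. Both arguments are equivalent once one notes that in $Q_8$ every order-4 element satisfies $g^{-1}=-g$ with $-1$ central, so the precise sign or exponent contributed at each crossing is irrelevant to the conjugation action.

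Your argument for part (2), however, has a genuine gap. You assert that ``consistency of the bicoloring guarantees a well-defined orientation'', but this is false: the bicoloring \emph{flips} whenever the loop passes under a strand of a different (non-purple) color (Fig.~\ref{fig:rules}a), so the orientation determined by ``black on the right at the basepoint'' genuinely depends on which arc the basepoint sits on. Consequently, when $b_i'$ lies on an arc with flipped bicoloring, the traversal direction reverses and the product read off from $b_i'$ is not the cyclic permutation $q_1 q_2$ but rather $(q_1 q_2)^{-1}$. Your conjugation computation also breaks at exactly this point: if $q_1$ anticommutes with $\chi_c$ (i.e.\ $q_1\chi_c q_1^{-1}=-\chi_c$), then $q_1\chi_c^m q_1^{-1}=(-1)^m\chi_c^m$, which is \emph{not} $\chi_c^m$ for odd $m$. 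Your parenthetical $(-\chi_c)^{-1}\chi_c^m(-\chi_c)=\chi_c^m$ is a true but irrelevant statement about conjugation \emph{by} $-\chi_c$, not about an element that \emph{sends} $\chi_c$ to $-\chi_c$.

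The paper's proof avoids this by moving the basepoint across one undercrossing at a time and splitting into two cases. If the overstrand has the same color $c$, the orientation is preserved and the argument is the cyclic-permutation one you gave. If the overstrand has a different color $c'$, the orientation reverses, so $q=q'\chi_{c'}^{\pm1}$ is replaced by $(\chi_{c'}^{\pm1}q')^{-1}=q'^{-1}\chi_{c'}^{\mp1}$; the key extra input is that $q'$ cannot be $\pm1$ (since $q'\chi_{c'}^{\pm1}\in\langle\chi_c\rangle$ with $c'\neq c$), hence $q'^{-1}=-q'$ and $\chi_{c'}^{\mp1}=-\chi_{c'}^{\pm1}$, and the two signs cancel. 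This orientation-reversal case, and the use of $g^{-1}=-g$ for order-4 elements to handle it, is the missing ingredient in your proposal.
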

\begin{proof}
Throughout the proof, we ignore all the purple strands since they do not affect the invariant. In order for the bicoloring to be consistent, each loop is overcrossed by a strand of different color an even number of times. For instance, if the loop $L_i$ is red, then, in the expression of $\alpha(L_i,b_i)$, the combined number of occurrences of $j$ and $k$ is even. After reordering, which contributes only a sign, the occurrences of $j$ and $k$ can be replaced by a single power of $i$, which proves the first claim.

For the proof of the second claim, we investigate the effect on $\alpha$ caused by moving the basepoint through an undercrossing to an adjacent arc in the diagram. In fact, as the self-writhe does not depend on the orientation of $L_i$, we need only to consider how this process affects $q$. There are two cases to consider, depending on the color $c'$ of the strand crossing over $L_i$. 
\begin{enumerate}
\item \emph{Case $c' = c$}: the basepoint moves, but the orientation is not altered. If the undercrossing through which the basepoint is moved is the first undercrossing according to the orientation, then $q = q' \chi^{\pm 1}_c$ is replaced by $\chi_c^{\pm 1} q'$. However, since $q' = q \chi_c^{\mp 1}$ is a power of $\chi_c$, it commutes with $\chi_c$, so $q$ does not change. The other case of the moving the base point along the negative orientation of the loop is proved in a similar fashion.
 
\item \emph{Case $c' \not = c$}: the basepoint moves and the orientation is altered. If the undercrossing is the first undercrossing according to the orientation, then $q = q' \chi_{c'}^{\pm 1}$ is replaced by $(\chi_{c'}^{\pm 1} q')^{-1} = q'^{-1} \chi_{c'}^{\mp 1}$. However, since $q'\chi_{c'}^{\pm}$ is a power of $\chi_c$, $q' \not \in \{1, -1\}$, and therefore $q'^{-1} = -q'$. Hence $q'^{-1} \chi_{c'}^{\mp 1} = q' \chi_{c'}^{\pm 1}$, and hence $q$ does not change in the process. The other case of the moving the base point along the negative orientation of the loop is proved in a similar fashion. \qedhere
\end{enumerate}
\end{proof}

Below, we define the colored invariants.

\begin{defn}\label{def:colinv}
Let $c \in \{ \red, \green, \blue \}$. Then the \emph{colored invariant} $Q_c \in \Zb_4$ of a $Q_8$-colored link diagram is defined as
\begin{equation}\label{eq:colinv}
\chi_c^{Q_c} := (-1)^{l_c} \prod_{L_i \text{ of color $c$}} \alpha(L_i),
\end{equation}
where $l_c$ is the sum of the pairwise linking numbers between loops of color $c$, modulo $2$. The product on the right side is well defined, since $\alpha(L_i)$ does not depend on the choice of a basepoint of $L_i$ and since the $\alpha(L_i)$ commute with each other as they are all powers of $\chi_c$. The invariant is well defined since the exponent of $\chi_c$ is well defined modulo $4$. Concrete examples are illustrated in Extended Data Figure 1.
\end{defn}

Our next result establishes a relationship between the invariants $Q_c$ and the linking invariant $l$.

\begin{prop}\label{prop:colinvgauss}
When reduced to modulo 2, the invariant $Q_c$ is equivalent to the linking invariant $l$.
\end{prop}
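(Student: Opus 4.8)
The plan is to unwind the definitions of both sides modulo $2$ and show that each reduces to the same parity of crossings. Recall from Definition~\ref{def:colinv} that $\chi_c^{Q_c} = (-1)^{l_c} \prod_{L_i \text{ of color } c} \alpha(L_i)$, and that by Lemma~\ref{lem:alpha}(1) each $\alpha(L_i)$ is a power of $\chi_c$, hence lies in the cyclic group $\langle \chi_c \rangle \cong \Zb_4$. Reducing the exponent modulo $2$ amounts to passing to the quotient $\langle \chi_c \rangle / \langle -1 \rangle \cong \Zb_2$, i.e.\ recording only whether $\chi_c^{Q_c}$ equals $\pm 1$ or $\pm \chi_c$. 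So the first step is: $Q_c \bmod 2$ is determined by whether $(-1)^{l_c}\prod_i \alpha(L_i)$ is $\pm 1$ (giving $0$) or $\pm\chi_c$ (giving $1$). The sign prefactor $(-1)^{l_c}$ is irrelevant to this dichotomy, so $Q_c \bmod 2$ equals the total parity of the exponents of $\chi_c$ appearing across all the $\alpha(L_i)$ for $L_i$ of color $c$.

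Next I would trace where those exponents come from. By Definition~\ref{def:ptdalpha}, $\alpha(L_i) = \chi_c^{-\omega_i} q_i$ where $q_i$ is a product of the quaternions labelling the undercrossings of $L_i$ beneath non-purple strands. The self-writhe contribution $\chi_c^{-\omega_i}$ is already a power of $\chi_c$; modulo $2$ its exponent is $\omega_i \bmod 2$. For $q_i$: the undercrossings of a color-$c$ loop beneath another color-$c$ strand contribute factors $\chi_c^{\pm 1}$ (each contributing $1$ to the parity), while undercrossings beneath strands of the other two colors contribute factors $\chi_{c'}^{\pm 1}$ with $c' \neq c$; as in the proof of Lemma~\ref{lem:alpha}, these come in pairs that combine (after reordering, which only changes signs) into further powers of $\chi_c$. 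The key bookkeeping step is therefore: modulo $2$, the exponent of $\chi_c$ in $q_i$ counts (with the right parity) the crossings of $L_i$ under strands of all three colors, and summing over all color-$c$ loops gives the total number of crossings in which a color-$c$ loop passes under some strand. The self-writhe terms $\sum_i \omega_i$ count the color-$c$ self-crossings (which are over-under pairs within color-$c$ loops). Combining, $Q_c \bmod 2$ equals the mod-$2$ count of all crossings in which a color-$c$ strand is the under-strand, restricted to over-strands that are not purple.

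Finally I would identify this parity with $l$. The linking invariant $l$ was defined (in the main text) as the number of times a blue strand crosses over a red strand mod $2$, and the text already argues it is independent of the pair of colors used and equals the total Gauss linking number mod $2$ between loops of any two distinct colors. The crossings where a color-$c$ strand goes under a color-$c$ strand cancel in pairs (self-crossings and linking within one color each contribute an even number, as noted for the consistency of the bicoloring), and the crossings where color-$c$ goes under each of the other two non-purple colors each contribute, mod $2$, the linking number between those color classes, which equals $l$. One needs to check the total is a single copy of $l$ and not, say, $2l \equiv 0$ or $3l$; the parity constraint forcing each loop to be overcrossed an even number of times by strands of each differing color (used already in Lemma~\ref{lem:alpha}) is exactly what pins this down. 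I expect the main obstacle to be precisely this last accounting: carefully matching the sign conventions in the self-writhe and in the Wirtinger labels so that the reordering corrections (which are signs, hence invisible mod $2$) genuinely drop out, and confirming that the sum over all color-$c$ loops telescopes to one copy of $l$ rather than an even multiple. This is essentially a careful parity count rather than a conceptual difficulty, but it is where all the care is needed.
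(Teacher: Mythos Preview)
Your overall strategy---reduce to the quotient $\langle\chi_c\rangle/\langle-1\rangle$ and count quaternion factors by parity---is exactly the paper's approach. The gap is precisely the one you flag at the end, and it stems from a specific miscount earlier in your argument.

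You write that ``modulo $2$, the exponent of $\chi_c$ in $q_i$ counts (with the right parity) the crossings of $L_i$ under strands of all three colors.'' This is where the accounting goes wrong. Suppose $q_i$ contains $a$ factors of $\chi_c^{\pm1}$, $b$ factors of $\chi_{c'}^{\pm1}$, and $d$ factors of $\chi_{c''}^{\pm1}$ (with $b+d$ even by bicoloring consistency). Up to sign, $q_i = \chi_c^{a}\,\chi_{c'}^{b}\,\chi_{c''}^{d}$, and the last two factors give $\pm 1$ if $b,d$ are both even and $\pm\chi_c$ if both odd. Hence the $\chi_c$-exponent of $q_i$ modulo~$2$ is $a+b$ (equivalently $a+d$), \emph{not} $a+b+d$. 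Your formula $a+b+d$ reduces to $a$ (since $b+d$ is even), and after combining with the self-writhe and summing over color-$c$ loops this indeed collapses to $0$---which is the ``$2l$ rather than $l$'' problem you anticipated.

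With the corrected count $a_i+b_i$, the rest of your argument goes through cleanly: summing $\omega_i+a_i$ over all color-$c$ loops gives $0$ modulo~$2$ (self-crossings are counted twice, and crossings between distinct color-$c$ loops come in even totals), while $\sum_i b_i$ is the number of times a $c$-strand passes under a $c'$-strand, which is $l$ modulo~$2$. This is exactly how the paper organizes it: show the $\pm\chi_c$ factors are even in number, then read off the parity from the $\chi_{c'}$ (equivalently $\chi_{c''}$) factors alone.
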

\begin{proof}
Recall that the linking invariant is the number of times a strand of color $c'$ passes over a strand of color $c''$, modulo 2, where $c'$ and $c''$ are two different non-purple colors. Let $c$ be color that is different from $c'$ and $c''$. As there are an even number of crossings between different loops of color $c$, these contribute an even number of multiplicative factors of $\pm \chi_c$ to $\prod_{L_i} \alpha(L_i)$, where $L_i$ ranges over all loops of color $c$. Moreover, the same is true for each self crossing of a loop of color $c$, as each self crossing contributes both to terms $\chi_c^{-\omega}$ and $q$ in Equation (\ref{eq:ptdalpha}). In other words, $\chi_c^{Q_c}$ is a product of quaternions, an even number of which are $\pm \chi_c$. 

If $l = 1$, then, in the expression of $\chi_c^{Q_c}$, there are an odd number of factors of form $\pm\chi_{c'}$ and an odd number of factors of form $\pm \chi_{c''}$. Reordering the expression, we conclude that $\chi_c^{Q_c} = \pm \chi_c$, i.e., $Q_c \equiv 1$ modulo 2. Similarly, if $l = 0$, then, in the expression of $\chi_c^{Q_c}$, there are an even number of factors of form $\pm\chi_{c'}$ and an even number of factors of form $\pm \chi_{c''}$. Reordering the expression, we conclude that $\chi_c^{Q_c} = \pm 1$, i.e., $Q_c \equiv 0$ modulo 2.
\end{proof}
 
There are various diagrams representing the same abstract link in $\Rb^3$, and they are related to each other by Reidemeister moves\cite{rolfsen:2003}. Similarly, there are many $Q_8$-colored link diagrams representing the same abstract $Q_8$-colored link, and they are related to each other by Reidemeister moves. Given an initial coloring, there exists a unique $Q_8$-coloring for the link diagram after a Reidemeister move has taken place; we refer to such moves, endowed with the data of a $Q_8$-coloring, as $Q_8$-colored Reidemeister moves. Examples are presented in Extended Data Figure 2. Next we prove that the invariants $Q_c$ are invariants of the $Q_8$-colored link rather than the particular link diagram chosen to present it.

\begin{lem}\label{lem:colinvreid}
The invariants $Q_c$ are conserved in $Q_8$-colored Reidemeister moves.
\end{lem}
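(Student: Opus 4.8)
The plan is to verify invariance under each of the three Reidemeister moves separately, in the $Q_8$-colored setting, tracking the effect on the two ingredients of $Q_c$: the self-writhe factor $\chi_c^{-\omega}$ hidden inside each $\alpha(L_i)$, and the product $q$ of under-crossing quaternions, together with the global sign $(-1)^{l_c}$. Because each $\alpha(L_i)$ is a power of $\chi_c$ by Lemma \ref{lem:alpha}(1), and does not depend on the basepoint by Lemma \ref{lem:alpha}(2), one has a lot of freedom: it suffices to show that after the move one may re-choose basepoints so that the new product of $\alpha$-invariants, times the new $(-1)^{l_c}$, equals the old value of $\chi_c^{Q_c}$. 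I would first remark that a $Q_8$-colored Reidemeister move affects only the strands involved in the move, and since purple strands contribute nothing, there are only finitely many color patterns to check for each move type.

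Next I would dispatch the moves one at a time. For a type I move (adding or removing a kink on a strand of color $c$), a self-crossing is created: this changes $\omega$ by $\pm 1$ and simultaneously inserts a factor $\chi_c^{\pm 1}$ into $q$ of that same loop (the loop crosses under itself once), so the net effect on $\alpha(L_i) = \chi_c^{-\omega}q$ is trivial; a kink on a strand of another color or a purple strand does nothing to $Q_c$. For a type II move, two strands are pushed across each other creating two crossings of opposite sign: if the two strands have the same color $c$ and the moving loop is the one of color $c$, the two new under-crossing factors are $\chi_c^{\pm1}$ and $\chi_c^{\mp1}$ (recording that the bicoloring is undisturbed when passing under the same color, Fig.~\ref{fig:rules}a), which cancel in $q$; if the two strands have different colors, then no under-crossing quaternion of color $c$ is produced at all when $c$ is the third color, and when one of the two strands has color $c$, the bicoloring flip under the other color means the two inserted factors are $\chi_{c'}^{\pm1}$ and $\chi_{c'}^{\mp1}$ in the appropriate order and again cancel; in all cases $l_c$ and the self-writhes are unchanged. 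The type III move is the substantive case: three strands, the crossing data is permuted but the set of crossings (and their signs) is preserved, so $\omega$ for every loop, every pairwise linking number, and hence $(-1)^{l_c}$, are all unchanged; what must be checked is that the ordered product $q$ for each loop of color $c$ is unchanged up to the basepoint ambiguity already licensed by Lemma \ref{lem:alpha}(2). Here one writes out, for each assignment of colors to the three strands, the three quaternions attached to the three crossings before and after the move and verifies — using the colored Wirtinger relation at the triple point, i.e.\ the analogue of $g_3 g_1 = g_1 g_2$ from Fig.~\ref{fig:diagex}d — that the two ordered products agree.

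The main obstacle is the type III verification: one has to confront the fact that the colored Reidemeister III move comes with a forced recoloring of the arcs (the unique $Q_8$-coloring after the move), and one must check that the conjugation relations among the three local colors make the before/after products of under-crossings coincide for a loop of color $c$. I would organize this by noting that the three local colors $a,b,d$ satisfy a Wirtinger-type identity at the triple point, reduce to the case where at least one of the three strands actually has color $c$ (otherwise $Q_c$ sees nothing), and then do the handful of bookkeeping computations, keeping careful track of the bicoloring-flip rule (Fig.~\ref{fig:rules}a) so that the inserted $\chi$-factors have the correct exponents and positions in the word. Since each $\alpha$ is a central power of $\chi_c$ within the group generated by $\chi_c$, any discrepancy that is itself a power of $\chi_c$ and moves from the middle of the word is harmless, which is what ultimately makes the identity go through. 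Once all three move types are checked, $Q_c$ is constant on each $Q_8$-colored Reidemeister equivalence class, which is the assertion of the lemma.
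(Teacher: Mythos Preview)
Your approach is correct and follows the same case-by-case structure as the paper's own proof: verify invariance under each Reidemeister move type by tracking the self-writhe, the product $q$, and $l_c$. Your treatments of types I and II match the paper's (the paper is terser for type II, simply asserting that $\alpha$ and the linking numbers are unchanged).

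The one place you diverge is in the emphasis on type III. You call it ``the substantive case'' and propose a case analysis over all assignments of colors to the three strands, invoking the Wirtinger relation at the triple point. The paper, by contrast, dispatches type III in a single sentence, on the same footing as type II: the move ``does not alter the self-writhe, the $\alpha$-invariant of any loop, or the linking numbers between loops.'' In fact no color case analysis is needed. Only the bottom strand's $q$ could possibly change (the top strand has no local undercrossings; the middle strand passes under the same arc of the top strand before and after), and for the bottom strand the Wirtinger relation gives the identity uniformly: if before the move the two consecutive factors are $a^{\epsilon_1}$ and $b^{\epsilon_2}$, then after the move they are $(a^{\epsilon_1} b\, a^{-\epsilon_1})^{\epsilon_2}$ and $a^{\epsilon_1}$, whose product is again $a^{\epsilon_1} b^{\epsilon_2}$. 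So your plan works, but it is more elaborate than necessary; the paper's argument is uniform in the colors and does not require the bookkeeping you anticipate.
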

\begin{proof}
We consider each type of Reidemeister move separately. The essential cases one has to consider are presented in Extended Data Figure 2. We again ignore the purple strands of the diagram in the proof, for they have no effect on the invariants.
\begin{enumerate}
\item \emph{Reidemeister move of type I}: such a move has the potential to alter the invariant only if it is applied to a loop $L_i$ of color $c$. Let $L'_i$ be the loop after the move has been performed. Choosing a suitable basepoint, one obtains expressions $\alpha(L_i) = \chi_c^{-\omega} q$ and $\alpha(L'_i) = \chi_c^{-\omega'} q' = \chi_c^{-\omega \mp 1} \chi_c^{\pm 1} q$, proving that $\alpha(L'_i) = \alpha(L_i)$. As the operation does not alter the linking numbers between loops of color $c$, the invariant $Q_c$ remains unchanged.

\item \emph{Reidemeister move of type II}: such a move does not alter the self-writhe, the $\alpha$-invariant of any loop or the linking numbers between loops, so the invariant remains unchanged.

\item \emph{Reidemeister move of type III}: such a move does not alter the self-writhe, the $\alpha$-invariant of any loop or the linking numbers between loops, so the invariant remains unchanged. \qedhere
\end{enumerate}
\end{proof}

Next, we investigate the conservation of the colored invariants $Q_c$ under topologically allowed strand crossings and local reconnections. As a disjoint union of $Q_8$-colored unknotted loops has trivial invariants $Q_c$, these invariants may be regarded as an obstructions for the unlinking of a $Q_8$-colored link using local reconnections and strand crossings.

\begin{lem}\label{lem:crossinv}
The invariants $Q_c$ are conserved in topologically allowed strand crossings. 
\end{lem}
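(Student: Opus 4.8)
The plan is to verify that a topologically allowed strand crossing leaves each colored invariant $Q_c$ unchanged by a careful case analysis, organized by the colors of the two strands involved and by whether the crossing being modified is a self-crossing or a crossing between two distinct components. Recall that a strand crossing is topologically allowed only when the two overlapping strands carry commuting elements of $Q_8$: in $Q_8$-colored terms this means the two strands have the same color, or at least one of them is purple. Since purple strands affect neither the $\alpha$-invariants nor the linking-number terms $l_c$, I would immediately dispose of the case where a purple strand is involved — such a crossing is invisible to every $Q_c$ — and reduce to the case of two strands of the same color, say color $c'$.

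**First I would** treat the case $c' \neq c$, i.e.\ the crossing being flipped involves two strands of a color different from the one we are measuring. A strand crossing between same-colored strands $c'$ does not change the bicoloring anywhere away from the crossing (this is the content of the topologically-allowed condition, and is reflected graphically in Fig.~\ref{fig:rules}c), so it does not alter the quaternion words $q$ attached to any loop of color $c$, nor the self-writhes of loops of color $c$, nor the linking numbers among loops of color $c$. Hence $\alpha(L_i)$ is unchanged for every $L_i$ of color $c$, $l_c$ is unchanged, and therefore $Q_c$ is unchanged. The only subtlety is that a strand crossing may merge two loops of color $c'$ into one or split one into two, or change the component structure among color-$c'$ loops — but since $c' \neq c$ this is irrelevant to $Q_c$.

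**The main work** is the case $c' = c$: the crossing being modified is between two strands of the very color we are measuring. Here I would split into sub-cases. If the modified crossing is a self-crossing of a single loop $L_i$ of color $c$, then flipping it changes the self-writhe $\omega$ by $\pm 2$ (the right-hand-rule sign flips), so $\chi_c^{-\omega}$ is multiplied by $\chi_c^{\pm 2} = -1$; meanwhile the crossing is also a self-crossing so it contributes to neither $q$ nor to linking numbers, and one checks the linking-number parity $l_c$ is unaffected — wait, but then $\alpha(L_i)$ picks up a factor of $-1$, so I must locate a compensating sign. The compensating sign comes from the fact that a self-crossing strand crossing of $L_i$ with itself is in fact a Reidemeister-I-type modification combined with the flip, or more precisely: actually a self-strand-crossing changes $L_i$ into a loop $L_i'$ whose diagram near the crossing is smoothed then re-crossed; the cleanest route is to observe that a same-color self-crossing flip differs from the identity by two Reidemeister moves of type II up to a type-I move, OR — more robustly — to argue that a strand crossing of a loop with itself necessarily also reverses the orientation convention on part of the loop, reshuffling $q$ in a way that produces exactly the needed sign. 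If instead the modified crossing is between two distinct loops $L_i, L_j$ both of color $c$, then flipping it changes the Gauss linking number between $L_i$ and $L_j$ by $\pm 1$, so $l_c$ flips parity, contributing a factor $(-1)$ to $\chi_c^{Q_c}$; simultaneously the crossing is one of the undercrossings appearing in $q$ for whichever of $L_i, L_j$ is the understrand, and flipping the crossing changes that $q$ by a factor $\chi_c^{\pm 1} \to \chi_c^{\mp 1}$, i.e.\ multiplies $\alpha$ by $\chi_c^{\mp 2} = -1$; these two factors of $-1$ cancel.

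**The hard part** will be making the self-crossing sub-case airtight: tracking precisely how the orientation convention "black on the right'' interacts with a strand-crossing flip on a self-crossing, and confirming that the induced reshuffling of the quaternion word $q$ — together with any change in the pairwise linking-number sum $l_c$ arising because the strand crossing can change which arcs belong to which loop — produces exactly the single compensating sign needed against the $\chi_c^{\pm 2}=-1$ from the self-writhe. I would handle this by reducing to the finitely many local pictures in Extended Data Figure 8, checking each against Definition~\ref{def:ptdalpha}, and invoking Lemma~\ref{lem:alpha} so that the choice of basepoint can be made most convenient in each picture. Throughout, Proposition~\ref{prop:colinvgauss} serves as a consistency check: the mod-2 reduction of every identity I derive must be the already-established invariance of $l$ under strand crossings.
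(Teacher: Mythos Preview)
Your case structure---purple strands trivial, crossings of color $c'\neq c$ trivial, and the substantive case $c'=c$ split into self-crossing versus crossing between two distinct loops---is exactly the decomposition the paper uses (its cases (A), (B), (C) in Extended Data Figure~3). The distinct-loop case (C) you also get essentially right: the product $\prod_i\alpha(L_i)$ acquires a sign, cancelled by the parity flip in $l_c$. Two corrections, though. First, a strand crossing \emph{never} changes the component structure of the link; that is what reconnections do. Your remark that a strand crossing ``may merge two loops of color $c'$ into one or split one into two'' is false and should be dropped. Second, in case (C) the factor $\chi_c^{\pm 1}$ does not invert in place inside one loop's word $q$; rather it is removed from the former understrand's $q$ and a (new) factor is inserted into the former overstrand's $q$. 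The net effect on the product is still $-1$, but your bookkeeping should reflect this.

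The real gap is in the self-crossing case. Your sentence ``the crossing is also a self-crossing so it contributes to neither $q$ nor to linking numbers'' is wrong: a self-crossing of $L_i$ contributes to \emph{both} $\omega$ \emph{and} $q$ (it is, after all, an undercrossing of $L_i$ beneath a non-purple strand). Indeed, the factor $\chi_c^{-\omega}$ in Definition~\ref{def:ptdalpha} is there precisely so that these two contributions cancel under a Reidemeister~I move. When you flip the self-crossing, $\omega$ changes by $\pm 2$ \emph{and} the undercrossing factor in $q$ moves to a different position in the word and may change its exponent, because the over-arc is now a different arc of $L_i$ with possibly different bicoloring. Your attempts to locate the compensating sign via Reidemeister manipulations or via ``reversing the orientation convention on part of the loop'' will not work: a same-color undercrossing does not flip the bicoloring, so the orientation is globally unchanged. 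What the paper does is write $q = q_2\,\chi_c^{\pm 1}\,q_1$ (the self-crossing factor sitting between the partial words $q_1,q_2$ along the two arcs) and then split into two sub-cases according to whether $q_1$ commutes or anticommutes with $\chi_c$---equivalently, whether the arc from the first to the second passage through the crossing goes under an even or odd number of strands of the other two non-purple colors. In each sub-case a short direct computation shows $\alpha(L_i')=\alpha(L_i)$. That commute/anticommute dichotomy is the missing idea in your plan.
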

\begin{proof}
The proof is presented in Extended Data Figure 3.
\end{proof}

\begin{lem}\label{lem:reconninv}
The invariants $Q_c$ are conserved in topologically allowed reconnections.
\end{lem}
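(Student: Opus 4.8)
The plan is to reduce the statement to a short case analysis resting on the structural properties of $\alpha$ recorded in Lemma~\ref{lem:alpha}. As in the earlier proofs I ignore purple strands throughout, since they contribute to no $\alpha(L_i)$, to no self-writhe, and to no linking number. A topologically allowed reconnection takes place between two strands of a common color; if that color is $c' \neq c$, it is performed inside a small disk through which only arcs of color $c'$ pass, so it creates and destroys no crossing incident to a loop of color $c$ and changes no linking number among loops of color $c$, and hence $Q_c$ is literally untouched. It therefore suffices to treat a reconnection between two arcs of color $c$. Being a cut-and-paste, such a move either (i) merges two loops $L_i$ and $L_j$ of color $c$ into a single loop $L_k$ of color $c$, or (ii) splits one loop of color $c$ into two loops of color $c$; the two cases are governed by the same bookkeeping, so I carry out (i) in detail.

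The first point to pin down is the local model for (i). Because the bicoloring must extend continuously across the surgered region, the two $c$-colored arcs entering the reconnection disk are forced to be antiparallel with respect to the orientations fixed by the ``black on the right'' convention, and the admissible reconnection is then the unique planar cut-and-paste of these arcs: the resulting loop $L_k$ traverses the portion of $L_i$ outside the disk and then the portion of $L_j$ outside the disk, each in its original direction, joined by two new arcs that lie inside the disk and carry no crossings. Consequently, traversing $L_k$ records exactly the undercrossing contributions of $L_i$ followed by those of $L_j$, except that the former crossings between $L_i$ and $L_j$ have become self-crossings of $L_k$. Since $q_i$ and $q_j$ are powers of $\chi_c$ by Lemma~\ref{lem:alpha}(1) and hence commute, this gives $q_k = q_i q_j$. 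For the self-writhe, the self-crossings of $L_k$ are those of $L_i$, those of $L_j$, and the crossings between $L_i$ and $L_j$, and because $L_k$'s orientation restricts to those of $L_i$ and $L_j$, the last group contributes $2\,\mathrm{lk}(L_i,L_j)$; thus $\omega_k = \omega_i + \omega_j + 2\,\mathrm{lk}(L_i,L_j)$. Using $\chi_c^2 = -1$ together with the commutativity of powers of $\chi_c$, one obtains
\[
\alpha(L_k) = \chi_c^{-\omega_k} q_k = (-1)^{\mathrm{lk}(L_i,L_j)}\,\alpha(L_i)\,\alpha(L_j).
\]

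It remains to feed this into Definition~\ref{def:colinv}. In the exponent $l_c$ of Equation~(\ref{eq:colinv}) the merge deletes the pair $\{L_i,L_j\}$, while every other $c$-colored loop $L_m$ satisfies $\mathrm{lk}(L_k,L_m) = \mathrm{lk}(L_i,L_m) + \mathrm{lk}(L_j,L_m)$, which is unchanged modulo $2$; hence $l_c$ changes precisely by $-\mathrm{lk}(L_i,L_j)$ modulo $2$, introducing a factor $(-1)^{\mathrm{lk}(L_i,L_j)}$ on the right-hand side of Equation~(\ref{eq:colinv}). This factor cancels the one carried by $\alpha(L_k)$, and since the $\alpha$ of every other $c$-colored loop is unaffected, $\chi_c^{Q_c}$, and therefore $Q_c \in \Zb_4$, is unchanged. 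This settles case (i); case (ii) follows by the same computation (it is (i) read backwards, with a splitting pair $\{L_i', L_i''\}$ appearing in $l_c$ and $2\,\mathrm{lk}(L_i', L_i'')$ of the self-crossings of $L_i$ becoming inter-component crossings), and the lemma is proved.

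The step I expect to be the main obstacle is the geometric one: checking carefully that continuity of the bicoloring across the reconnection really does force the antiparallel local picture in every admissible sub-configuration (the black region lying ``inside'' versus ``outside'' the two new turn-around arcs), and that in each such sub-configuration $L_k$ inherits the orientations of $L_i$ and $L_j$ on the relevant arcs. This orientation matching is precisely what keeps a spurious sign out of the identity $q_k = q_i q_j$, since reading a loop's undercrossings in the wrong direction can in general alter the product $q$. Once the few relevant local pictures are drawn out, in the style used for Lemma~\ref{lem:crossinv}, the remainder is the routine bookkeeping above.
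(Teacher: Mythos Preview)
Your bookkeeping in case (i) is correct, and the device of tracking the extra $2\,\mathrm{lk}(L_i,L_j)$ in the self-writhe and cancelling it against the change in $l_c$ is clean; in fact this is somewhat more direct than the paper, which first invokes Lemma~\ref{lem:crossinv} to unknot the loop and thereby avoid self-writhe bookkeeping altogether. However, the dichotomy ``either merges two loops into one, or splits one loop into two'' is not exhaustive, and this is a genuine gap.

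A reconnection applied to two arcs of a \emph{single} loop $L_i$ need not split it. If, traversing $L_i$ from one reconnection point $x$ to the other point $y$, the strand passes under an odd number of arcs of the other two non-purple colors (equivalently, the partial product $q_1$ is not a power of $\chi_c$), then the bicoloring at $y$ is flipped relative to that at $x$, and with respect to the single global orientation of $L_i$ the two arcs at the reconnection site are \emph{parallel}. The planar cut-and-paste of two parallel arcs returns a single loop $L_i'$, now carrying an extra self-crossing; this situation is neither your case (i) nor its reverse. Your ``antiparallel'' claim is correct only for the \emph{local} orientations determined independently by the bicoloring at each arc. For two distinct loops you may place separate basepoints at $x$ and $y$ and promote these local orientations to global ones, which is why your case (i) goes through; but a single loop has only one basepoint, and the local orientation at $y$ need not agree with the global one.

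The paper isolates exactly this missing case: after reducing to a self-reconnection and using Lemma~\ref{lem:crossinv} to assume $L_i$ is unknotted, it separates according to whether $q_1$ is or is not a power of $\chi_c$. The second subcase requires the quaternionic identities $q_2^{-1}=-q_2$ and $\chi_c q_2 = - q_2 \chi_c$ (valid because $q_2$ is then not a power of $\chi_c$), which together produce the sign needed to show $\alpha(L_i') = \alpha(L_i)$ despite the new self-crossing contributing to both $\omega$ and $q$. Your argument does not supply this computation, and the hedging in your final paragraph about ``the antiparallel local picture in every admissible sub-configuration'' is pointing at, but not resolving, precisely this issue.
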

\begin{proof}
If the reconnection takes place between two different loops, they will merge into one loop as a result. As it is enough to establish the conservation of $Q_c$ in the inverse process of such an event, we may assume that the reconnection takes place between points $x$ and $y$ on the same loop $L_i$. Moreover, since any knot can be unknotted by crossing changes, Lemma \ref{lem:crossinv} implies that we may assume that $L_i$ is an unknot. Hence, we may choose a $Q_8$-colored link diagram representing the link $L$ where the loop $L_i$ has no self-crossings, and where no extra arcs appear in the imminent neighborhood of the location where the reconnection takes place. Visualization is provided by Extended Data Figure 4. 

Let $q_1$ and $q_2$ be the quaternions defined much like $q$ in Definition \ref{def:ptdalpha}, but where $q_1$ accounts for the undercrossings occurring when traversing from $x$ to $y$ according to the orientation convention at $x$, and where $q_2$ accounts for the undercrossings occurring when continuing from $y$ back to $x$. Note that $\alpha(L_i) = q_2 q_1$. There are two cases to consider.
\begin{enumerate}
\item \emph{$q_1$ is a power of $\chi_c$}: i.e., on the path from $x$ to $y$, $L_i$ crosses under an even number of strands than of color other than $c$. In this case, the reconnection splits the loop $L_i$ into two loops $L'_i$ and $L''_i$. Moreover, as $\alpha(L'_i) = q_2$ and $\alpha(L''_i) = q_1$, and as the process does not affect $l_c$ modulo 2, the invariant $Q_c$ remains unchanged.
\item \emph{$q_1$ is not a power of $\chi_c$}: by rotating the diagram, by rotating the loop $L_i$ around an axis in the plane of the diagram, and by swapping the labels of $x$ and $y$ if necessary, we may assume that the strands are horizontal around the point of reconnection, that $x$ is below $y$ in the picture, and that the orientation at $x$ points to right. There are two possible reconnections which are related by a topologically allowed strand crossing. By Lemma \ref{lem:crossinv}, it is enough to investigate only one of these; we focus on the one in which, after traversing the path corresponding to $q_1$ in the modified loop $L'_i$, the loop crosses under itself. As the self writhe of $L'_i$ is $-1$, by definition $\alpha(L'_i) = \chi_c q_2^{-1} \chi_c^{-1} q_1$. As $q_1$ is not a power of $\chi_c$ but $q_2 q_1$ is, $q_2^{-1} = -q_2$ and $q_2$ anticommutes with $\chi_c$, and therefore $\chi_c q_2^{-1} \chi_c^{-1} q_1 = q_2 q_1$. In other words, $\alpha(L'_i) = \alpha(L_i)$. As the reconnection does not affect the total modulo-two linking number between loops of color $c$, the invariant $Q_c$ remains unchanged. \qedhere
\end{enumerate}
\end{proof}

Using the previous results, it is possible to classify all the $Q_8$-colored links up to strand crossings and local reconnection events (we refer to this, in short, as the \emph{classification of $Q_8$-colored links}), and to prove the equivalence of the colored invariants $Q_\red$, $Q_\green$ and $Q_\blue$. Given a $Q_8$-colored link, one can perform local reconnections in order to connect loops of the same color. After doing so and ignoring the potential purple loop, we have a link of at most three components, and each component is labeled with either red, green or blue color. We will momentarily forget the bicoloring, and choose an orientation for each loop.

Milnor has classified oriented links of at most three components up to \emph{link homotopy}\cite{milnor:1954, milnor:1957}, i.e., up to such continuous deformations of the link where the different components are not allowed to meet, but where self intersections are allowed. Such deformations may be expressed in terms of $Q_8$-colored Reidemeister moves and topologically allowed strand crossings, and therefore Milnor's classification provides an intermediate step in the classification of $Q_8$-colored links. If there is only one loop, then there is only one link up to link homotopy. If there are two loops, $L_1$ and $L_2$, then the link is completely characterized by the Gauss linking number $\mu(12)$ of $L_1$ and $L_2$. In case of three loops, the third loop $L_3$ corresponds to a canonical element of form $\alpha_1 \alpha_2^{-\mu(123)} \alpha_1^{-1} \alpha_2^{\mu(123)}\alpha_2^{\mu(23)} \alpha_1^{\mu(13)} $ in the fundamental group $\pi_1(\Rb^3 \backslash (L_1 \cup L_2))$, where $\alpha_i$ is an element of $\pi_1(\Rb^3 \backslash (L_1 \cup L_2))$ that corresponds to a loop that winds once around $L_i$ in the positive direction. A concrete example is provided in Extended Data Figure 6. Above, $\mu(ij)$ is the linking number between $L_i$ and $L_j$, and $\mu(123)$ is the \emph{triple linking number}, an integer well defined up to the greatest common divisor $d$ of $\mu(12), \mu(13)$ and $\mu(23)$. The link is completely classified by $\mu(12), \mu(13), \mu(23) \in \Zb$ and $[\mu(123)] \in \Zb_d$.

There are three non-trivial cases to consider.
\begin{enumerate}
\item \emph{The link has only two loops.} The bicoloring cannot be consistent unless $\mu(12)$ is even. Applying the surgery operation depicted in Extended Data Figure 5 (A), it is possible to achieve $\mu(12) = 0$. In other words, such a $Q_8$-colored link is in the trivial class. 

\item \emph{The link has three loops and $\mu(12)$ is odd.} In order for the bicoloring to be consistent, also $\mu(13)$ and $\mu(23)$ have to be odd. Applying the surgery operation depicted in Extended Data Figure 5 (A), it is possible to achieve $\mu(12)=\mu(13)=\mu(23) = 1$. These numbers classify the link up to link homotopy as their greatest common divisor is 1; such a link is homotopic to a looped chain of length three. There are 8 possible bicolorings for such a link. It is possible to ``flip'' any two bicolorings simultaneously, as depicted in Extended Data Figure 5 (C), leaving exactly two classes, which correspond to values $Q_c = [1]$ and $Q_c = [3]$.

\item \emph{The link has three loops and $\mu(12)$ is even.} In order for the bicoloring to be consistent, also $\mu(13)$ and $\mu(23)$ have to be even. Applying the surgery operation depicted in Extended Data Figure 5 (A), it is possible to achieve $\mu(12)=\mu(13)=\mu(23) = 0$ and that $\mu(123)$ is either $0$ or $1$, depending on the parity of $\mu(123)$ in the original link. If $\mu(123)=0$, then the loops are not linked, so this case corresponds to the trivial class. If $\mu(123)=1$, then the link is homotopic to the Borromean rings. There are again 8 possible bicolorings for such a link. Moreover, it is possible to ``flip'' any single bicoloring, as depicted in Extended Data Figure 5 (D), so all of the possible bicolorings are in the same class which corresponds to $Q_c= [2]$.
\end{enumerate}

There are several immediate consequences for the discussion above. First we consider the equivalence of the colored invariants.

\begin{prop}
The invariants $Q_\red, Q_\green$ and $Q_\blue$ are equivalent. 
\end{prop}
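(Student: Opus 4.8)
The strategy is to combine the classification of $Q_8$-colored links established above with the fact that each colored invariant is insensitive to the allowed moves. By Lemmas~\ref{lem:colinvreid}, \ref{lem:crossinv} and \ref{lem:reconninv}, the value of $Q_c$ depends only on the equivalence class of the $Q_8$-colored link under $Q_8$-colored Reidemeister moves, strand crossings and local reconnections. It therefore suffices, for each of the finitely many classes produced by the classification, to pick one convenient representative diagram and to check on it that $Q_\red$, $Q_\green$ and $Q_\blue$ coincide; the possible extra purple loop can be ignored throughout, since it affects none of the invariants.

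The classes in which at most two colors occur are immediate. If a representative contains no loop of color $c$, then the product in Equation~(\ref{eq:colinv}) is empty and $l_c = 0$, so $Q_c = [0]$; and the classification shows that any such link---after merging loops of a common color there are at most two differently colored components---lies in the trivial class, whose standard representative is a disjoint union of unknotted loops. On such a diagram every $\alpha(L_i) = 1$ and every $l_c = 0$, whence $Q_\red = Q_\green = Q_\blue = [0]$.

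It remains to treat the two essentially three-colored families: the looped chain of length three, with all $\mu(ij)$ odd, and the $Q_8$-colored Borromean rings, with all $\mu(ij)$ even and $\mu(123)$ odd. For each I would take the symmetric representative diagram used in the classification, in which no loop has a self-crossing, so that by Definition~\ref{def:ptdalpha} the invariant $\alpha(L_i)$ of the unique red, green, or blue loop is simply the ordered product of the quaternions $\chi_{c'}^{\pm 1}$ attached to its undercrossings. By Lemma~\ref{lem:alpha} this product is a power of $\chi_c$, and reading it off gives $Q_c = [1]$---or $[3]$ for the flipped bicoloring---in the chain case and $Q_c = [2]$ in the Borromean case, each time independently of $c$ (concrete evaluations of this type appear in Extended Data Figure 1). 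The structural reason the three computations agree is that the cyclic relabeling $\chi_\red \mapsto \chi_\green \mapsto \chi_\blue \mapsto \chi_\red$, that is $i \mapsto j \mapsto k \mapsto i$, is an automorphism of the multiplication of $Q_8$, and it is implemented here by a $120^\circ$ spatial rotation that cyclically permutes the three loops of the symmetric representative.

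The main obstacle is making this last symmetry argument airtight: one must verify that the $120^\circ$ rotation respects the right-hand orientation convention for bicolored loops and introduces no stray signs, and one must keep careful track of the $\pm 1$'s that reordering crossings produces in the quaternion $q$ of Definition~\ref{def:ptdalpha}. This is a finite, elementary verification, and it is aided by Proposition~\ref{prop:colinvgauss}, which already forces $Q_\red \equiv Q_\green \equiv Q_\blue \pmod 2$; only the ambiguities $[1]$ versus $[3]$ and $[0]$ versus $[2]$ then need to be resolved, which is exactly what the representative computations accomplish.
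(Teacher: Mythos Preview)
Your proposal is correct and follows essentially the same route as the paper: use the invariance of each $Q_c$ under Reidemeister moves, strand crossings and reconnections (Lemmas~\ref{lem:colinvreid}--\ref{lem:reconninv}) together with the preceding classification discussion to reduce the equality $Q_\red=Q_\green=Q_\blue$ to a direct check on the finitely many representative diagrams (trivial unlinks, the looped chain, and the Borromean rings). The paper's proof is terser---it simply cites the reduction and leaves the finite check implicit---while you spell out the trivial classes and add the pleasant observation that the cyclic automorphism $i\mapsto j\mapsto k\mapsto i$ of $Q_8$, realized geometrically by a $120^\circ$ rotation of the symmetric representative, explains why the three computations must agree; this is a nice conceptual bonus but not strictly necessary, since the direct evaluation on each representative already settles the matter and makes your closing worries about sign bookkeeping moot.
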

\begin{proof}
According to the above discussion, any nontrivial $Q_8$-colored link diagram can be reduced to one of the three links depicted in Extended Data Figure 5 (B) by applying topologically allowed strand crossings and reconnections. Hence, one has to check the desired equality $Q_\red = Q_\green = Q_\blue$ only in these cases.
\end{proof}

The invariant $Q$ is defined as the common value of the colored invariants $Q_c$. The following result establishes a connection between the it and Milnor's triple linking number.

\begin{prop}
If $L$ is a $Q_8$-colored link with at most one component of each color. If $l = 0 \in \Zb_2$, then $Q = [2 \mu(123)] \in \Zb_4$, where $\mu(123)$ is Milnor's triple linking number.
\end{prop}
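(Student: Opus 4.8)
The plan is to reduce to the model case and then compute both sides directly. By the classification of $Q_8$-colored links established above, any $L$ with $l = 0$ is, up to topologically allowed strand crossings and reconnections, either a disjoint union of unknotted loops (the trivial class, where clearly $Q = [0]$ and $\mu(123) = 0$), or the $Q_8$-colored Borromean rings of case (3) above, for which $Q = [2]$. Since $Q$ is invariant under these moves (Lemmas \ref{lem:crossinv} and \ref{lem:reconninv}), it suffices to check that $\mu(123)$ has the correct parity in each representative: $\mu(123) \equiv 0$ in the trivial case and $\mu(123) \equiv 1$ for the Borromean rings, so that $[2\mu(123)]$ equals $[0]$ and $[2]$ respectively, matching $Q$.

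The one subtlety is that $\mu(123)$ is only defined modulo $d = \gcd(\mu(12), \mu(13), \mu(23))$, so I must first argue this is well posed. When $l = 0$ and $L$ has at most one loop of each color, consistency of the bicoloring forces all pairwise linking numbers $\mu(12), \mu(13), \mu(23)$ to be even (this is exactly the computation already used in case (3) of the classification, and also follows from Proposition \ref{prop:colinvgauss} since $l=0$). Hence $d$ is even, and therefore $[2\mu(123)] \in \Zb_4$ is a well-defined residue independent of the ambiguity in $\mu(123)$: changing $\mu(123)$ by a multiple of $d$ changes $2\mu(123)$ by a multiple of $2d \equiv 0 \pmod 4$. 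This also shows the parity of $\mu(123)$ is itself well defined, which is all that the reduction in the previous paragraph needs.

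First I would record that both $Q$ and the parity of $\mu(123)$ are preserved under link homotopy, which by the discussion preceding this statement is generated by $Q_8$-colored Reidemeister moves (Lemma \ref{lem:colinvreid}) and topologically allowed strand crossings (Lemma \ref{lem:crossinv}); the latter change $\mu(123)$ only by multiples of the (even) gcd $d$. Then I would invoke the classification to pass to one of the two representatives with $l=0$, and finish by the explicit evaluation of $\alpha$ and the colored invariant $Q_c$ on each: on a disjoint union of unknots every $\alpha(L_i)$ is trivial and $l_c = 0$, giving $Q = [0]$; on the Borromean-rings representative the computation in case (3) gives $Q_c = [2]$, as illustrated in Extended Data Figure 5 (D) and Figure 6.

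I expect the main obstacle to be the well-definedness bookkeeping rather than any single hard computation: one must be careful that ``$Q = [2\mu(123)]$'' is being asserted as an equality in $\Zb_4$ between a genuinely well-defined invariant and a quantity that a priori involves the modular ambiguity of the triple linking number, and that the hypothesis $l = 0$ is precisely what forces $d$ to be even and hence kills that ambiguity. Once that is pinned down, the verification on the two model links is immediate from the $\alpha$-invariant computations already carried out in the classification argument.
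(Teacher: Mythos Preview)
Your proposal is correct and follows essentially the same approach as the paper: the paper's proof is the single line ``This was established in the third case of the above discussion,'' i.e., exactly the reduction to the trivial and Borromean model links that you spell out. Your write-up is more explicit than the paper about why $[2\mu(123)]\in\Zb_4$ is well defined (namely that $l=0$ forces $d=\gcd(\mu(12),\mu(13),\mu(23))$ to be even), but the logical content is the same.
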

\begin{proof}
This was established in the third case of the above discussion.
\end{proof}

Finally, we classify $Q_8$-colored links. 

\begin{thm}[Classification of $Q_8$-colored links]\label{thm:classification} Up to topologically allowed strand crossings and reconnections, there are only the following classes of $Q_8$-colored links:
\begin{enumerate}
\item 16 classes of trivial links, corresponding to untangled disjoint unions of loops of different colors;
\item 2 classes each (with and without a purple loop) for which the invariant $Q$ obtains the values $[1]$, $[2]$ and $[3]$ in $\Zb_4$.
\end{enumerate}
\end{thm}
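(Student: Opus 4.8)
The plan is to bring every $Q_8$-colored link, by topologically allowed moves, into one of a finite and completely explicit list of normal forms, and then to separate the normal forms by invariants. First I would use local reconnections between strands of a common color to merge all red loops into one, and similarly for green, blue and purple; each merge can be made compatibly with the bicoloring by choosing, among the two local reconnection patterns (related by an allowed same-color strand crossing), the one that respects the red/black boundary. Since $-1$ is central in $Q_8$ and any strand crossing involving a purple strand is topologically allowed, the surviving purple loop can then be pushed through every other strand --- such crossings change neither the colorings nor the number of the other loops --- so it can be completely unlinked and unknotted. This leaves at most one unknotted, unlinked purple loop together with at most one loop of each of the three non-purple colors, each carrying its bicoloring.

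Next I would forget the purple loop and the bicolorings and apply Milnor's link-homotopy classification to the at most three remaining loops: the link is determined, up to link homotopy, by the pairwise linking numbers $\mu(ij)$ and, for three components, by the residue $[\mu(123)]$, and every link homotopy is realized by $Q_8$-colored Reidemeister moves and topologically allowed strand crossings. I would then run the three-case analysis of the discussion preceding this theorem, using that bicoloring-consistency forces $\mu(12),\mu(13),\mu(23)$ to have a common parity: with at most two non-purple loops, or with three and all $\mu(ij)$ even and $\mu(123)$ even, the non-purple part is equivalent to a disjoint union of unknotted, unlinked loops; with three non-purple loops and all $\mu(ij)$ odd, it is equivalent to a looped chain of length three, for which exactly two bicoloring classes survive the permitted flips, namely $Q_c=[1]$ and $Q_c=[3]$; and with three non-purple loops, all $\mu(ij)$ even and $\mu(123)$ odd, it is equivalent to the Borromean rings, for which all bicolorings become equivalent, giving $Q_c=[2]$. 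Reinstating the optional purple loop, this shows that every $Q_8$-colored link is equivalent to exactly one of: a disjoint union of unknotted, unlinked loops using some subset of the four colors ($2^4=16$ normal forms); or one of the three nontrivial links above, with or without an extra disjoint unknotted purple loop ($3\cdot 2=6$ normal forms).

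It then remains to show the $16+6=22$ normal forms are pairwise inequivalent. A disjoint union of unknotted, unlinked loops has $l=0$ and all $\alpha$-invariants equal to $1$, hence $Q=[0]$; moreover, if some non-purple color is absent then the corresponding colored invariant is $[0]$, so $Q=[0]$ by the equivalence of the colored invariants. Hence, by Lemmas~\ref{lem:colinvreid}, \ref{lem:crossinv} and~\ref{lem:reconninv}, the invariant $Q$ equals $[0]$ on all $16$ trivial normal forms and equals $[1],[2],[3]$ on the six nontrivial ones, which separates the trivial class from the nontrivial ones and separates the three nonzero values. To separate normal forms sharing a value of $Q$, I would invoke the paper's standing hypothesis that a vortex ring cannot contract to a point together with the fact that reconnections act only between strands of the same color: under the permitted dynamics a loop of a fixed color can never be destroyed, so the set of colors that actually occur is itself an invariant. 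This distinguishes the $16$ trivial normal forms from one another, and distinguishes the two normal forms attached to each nonzero $Q$ --- there all three non-purple colors must occur, so the only remaining freedom is the presence of the purple loop.

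The step I expect to require the real care is not any single computation but the last piece of bookkeeping: one must be certain that the like-colored merges in the reduction can always be performed compatibly with the bicoloring, that the three-case reduction exhausts every link with at most three colored components, and --- most crucially --- that ``a loop of each given color is present'' is genuinely preserved by the permitted moves, since it is exactly this (together with the exclusion of ring-to-point collapse) that forces the factor of two for a disjoint purple loop and rules out any further collapses among the $22$ classes.
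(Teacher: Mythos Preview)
Your argument is correct and follows essentially the same route as the paper: reduce to at most one loop per color via same-color reconnections, unlink the purple loop using that $-1$ is central, apply Milnor's link-homotopy classification together with the surgery of Extended Data Figure~5, and then sort out the bicolorings on the resulting normal forms. Your write-up is in fact more explicit than the paper's on one point the paper leaves tacit---namely, that the $22$ normal forms are pairwise inequivalent---and your use of the ``set of colors present'' as an invariant (justified by the standing exclusion of ring-to-point collapse and the fact that reconnections are intra-color) is exactly the right way to separate the $16$ trivial classes and the purple/no-purple doubling.
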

\begin{proof}
This follows immediately from the discussion above.
\end{proof}

\begin{figure}
\includegraphics[scale=1.1]{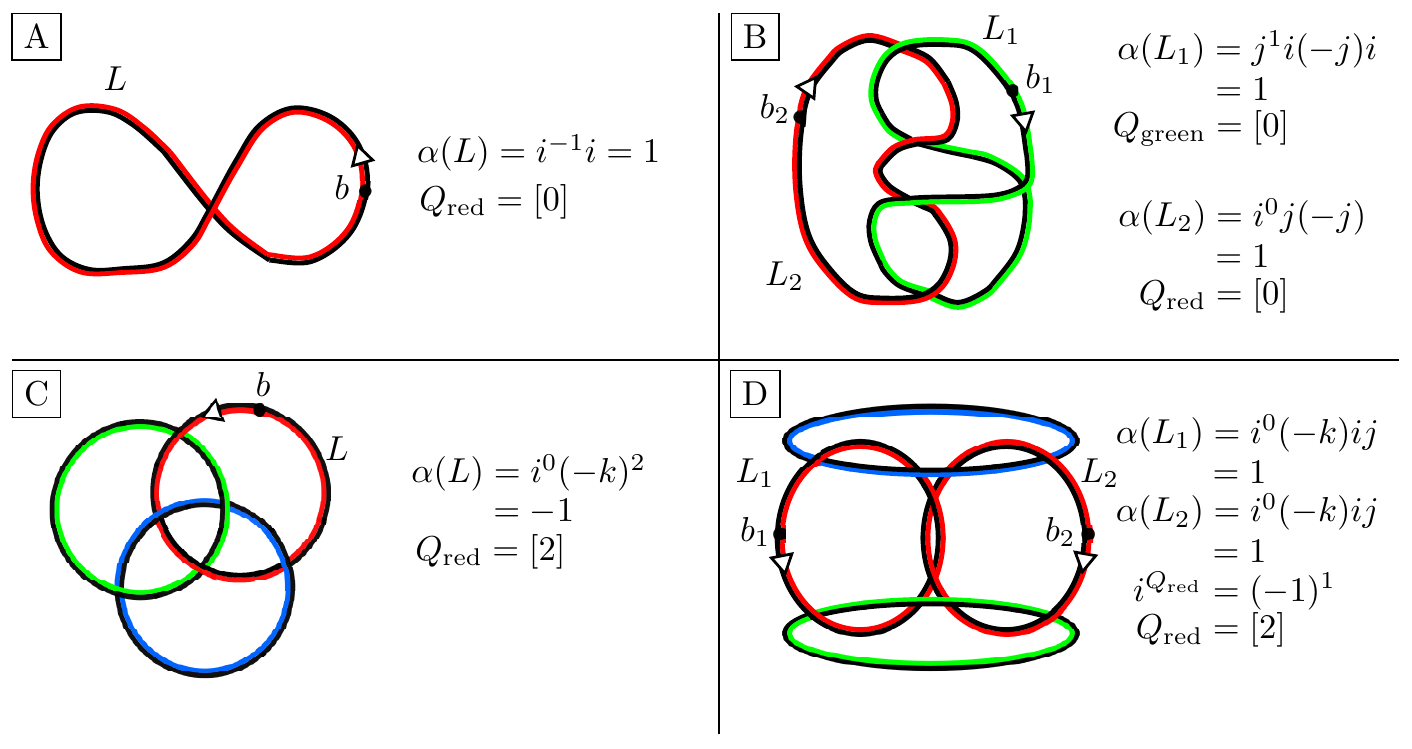}
\begin{flushleft}
\textbf{Extended Data Figure 1} | \textbf{Examples of $\alpha$-invariants and colored invariants $Q_c$ of $Q_8$-colored links.} \textbf{a}--\textbf{d}, Loops of interest $L_k$ and the corresponding basepoints $b_k$ together with their $\alpha$ and $Q_c$ invariants. Recall that according to the orientation convention, a loop is oriented in such a way, that when moving from the basepoint according to the orientation, the black color of the bicoloring is on the right.
\end{flushleft}
\end{figure}

\begin{figure}
\includegraphics[scale=1.1]{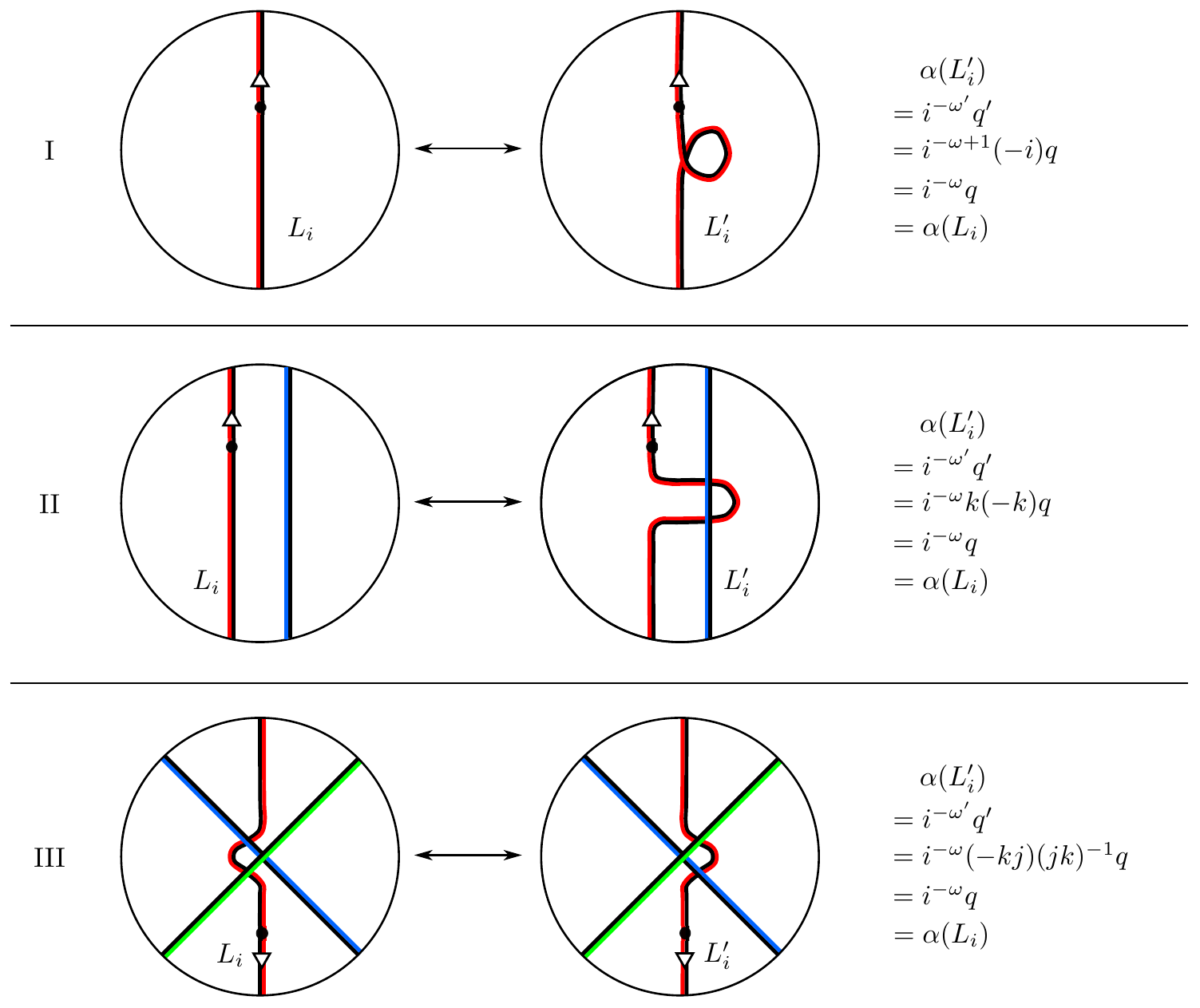}
\begin{flushleft}
\textbf{Extended Data Figure 2} | \textbf{Examples of $Q_8$-colored Reidemeister moves, and the conservation of the colored invariants $Q_c$.} The Reidemeister moves do not affect the pairwise modulo-two linking numbers between loops of the same color, and therefore it is enough to verify that the $\alpha$-invariant of the modified loop remains unchanged. 
\end{flushleft}
\end{figure}

\begin{figure}
\includegraphics[scale=0.9]{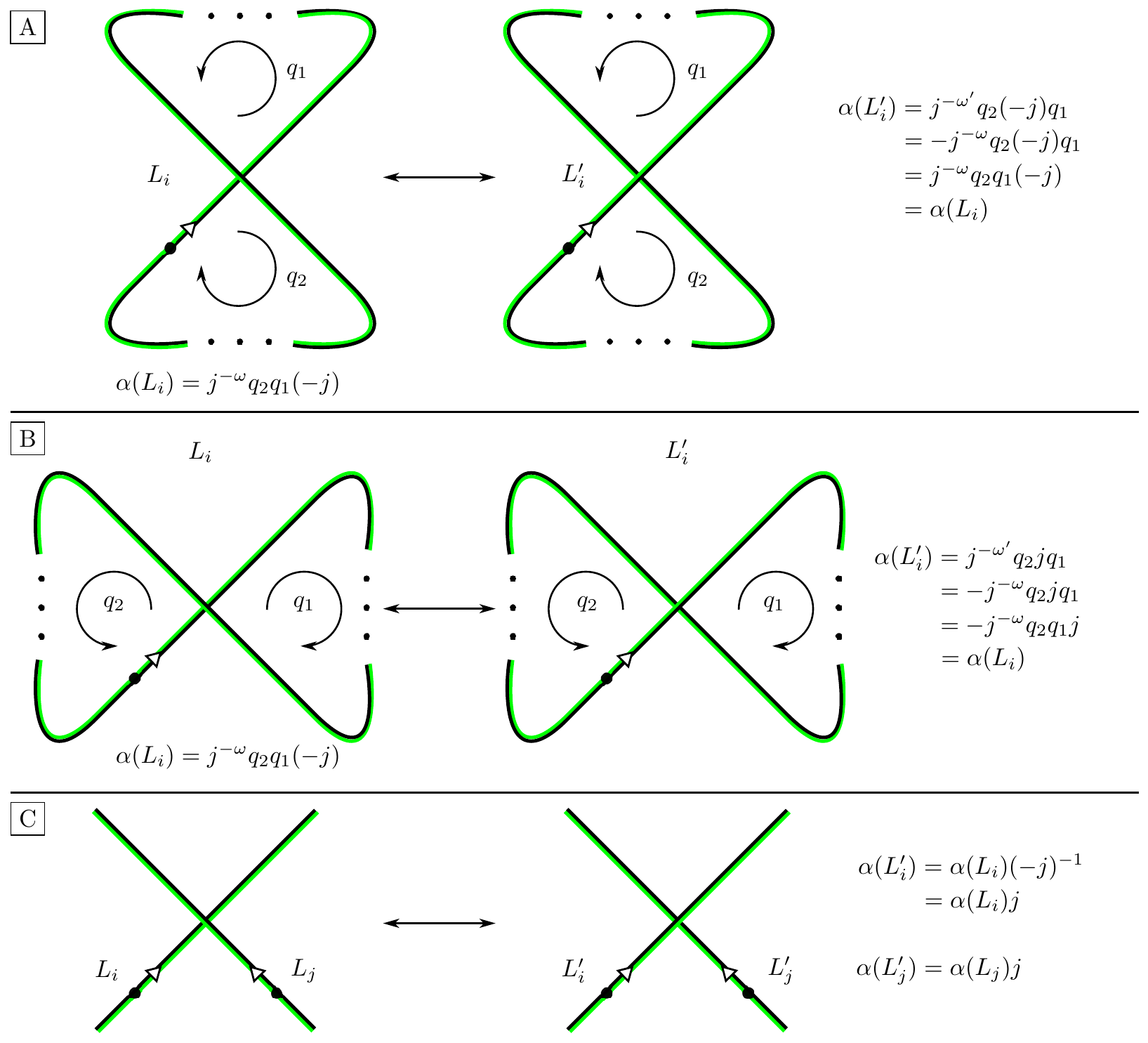}
\begin{flushleft}
\textbf{Extended Data Figure 3} | \textbf{Conservation of the colored invariant $Q_\green$ in topologically allowed strand crossings.} By rotating the picture if necessary, one can assume that the black color of the bicoloring is on the right hand side when moving upwards along either of the strands of the crossing. The pictures (A), (B) and (C) depict the cases that can occur. In (A) and (B) the strands partaking in the crossing change are parts of the same loop $L_i$; note that the self-writhes of $L_i$ and $L'_i$ differ by $\pm 2$. The difference between the two cases is that in (A), the quaternion $q_1$ anticommutes with $j$, because the path corresponding to $q_1$ passes under an odd number of red and green strands in total, whereas, for analogous reasons, in (B), $q_1$ commutes with $j$. The crossing change in (A) and (B) does not alter the linking number of $L_i$ with other red loops, so $Q_\green$ remains unchanged. In (C), the strands are part of loops $L_i$ and $L_j$. Note that $\alpha(L'_i)\alpha(L'_j) = - \alpha(L_i)\alpha(L_j)$, but the introduced sign is canceled as the total modulo-two linking number between green loops is altered by $1$, so $Q_\green$ remains unchanged.
\end{flushleft}
\end{figure}

\begin{figure}
\includegraphics[scale=1.1]{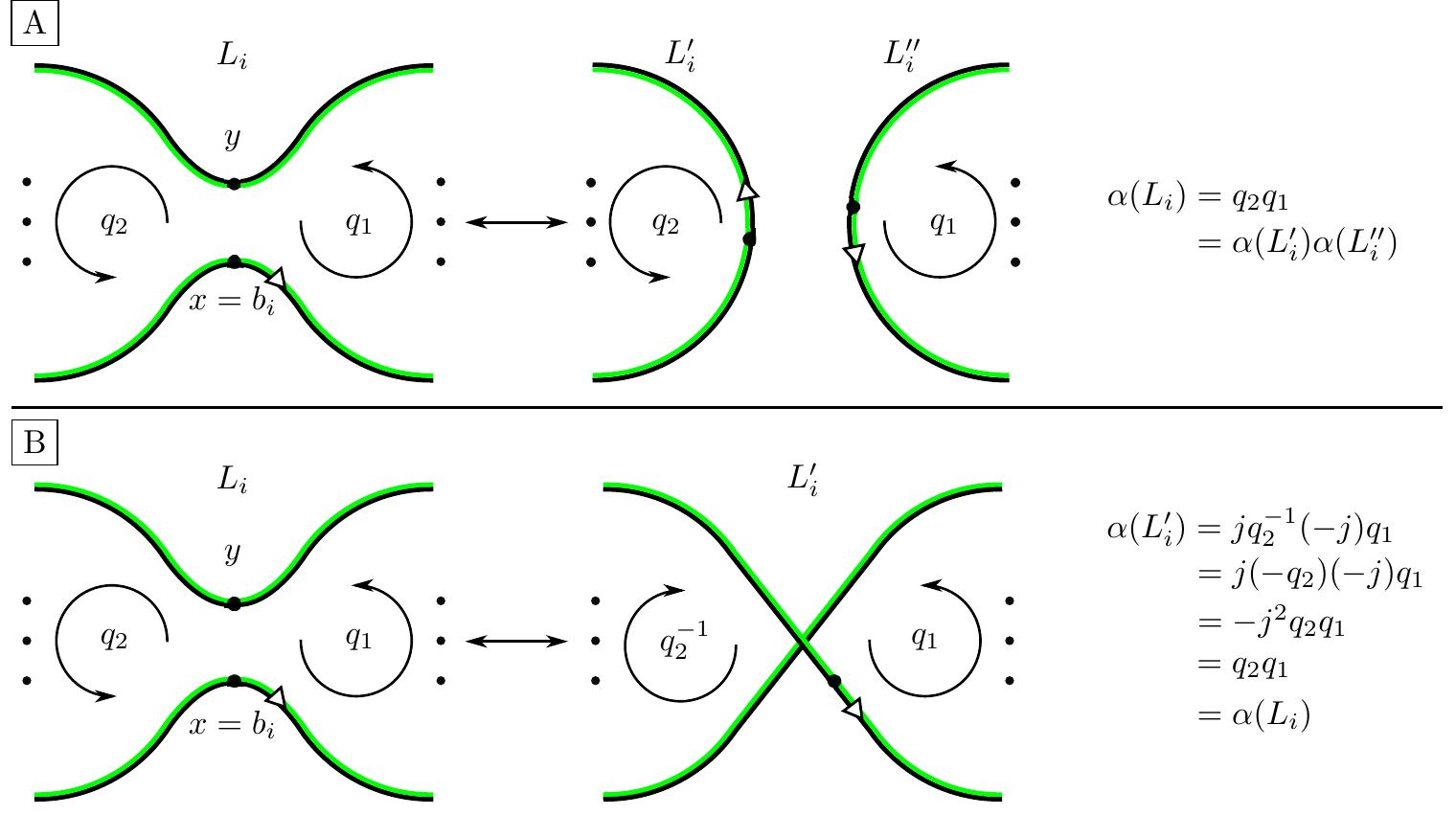}
\begin{flushleft}
\textbf{Extended Data Figure 4} | \textbf{Conservation of the colored invariant $Q_\green$ in topologically allowed reconnections.} (A) and (B) illustrate the cases to consider. The difference between them is the total number of times $L_i$ crosses under a red or a blue strand, modulo 2, when traversing from the point $x$ to $y$. (A) In the case of an even number, the loop $L_i$ splits into two loops $L'_i$ and $L''_i$. (B) In the case of an odd number, the quaternions $q_i$ anticommute with $j$ and $q_i^{-1} = -q_1$, which are important facts used to deduce the conservation of the invariant $Q_\green$.
\end{flushleft}
\end{figure}

\begin{figure}
\includegraphics[scale=1]{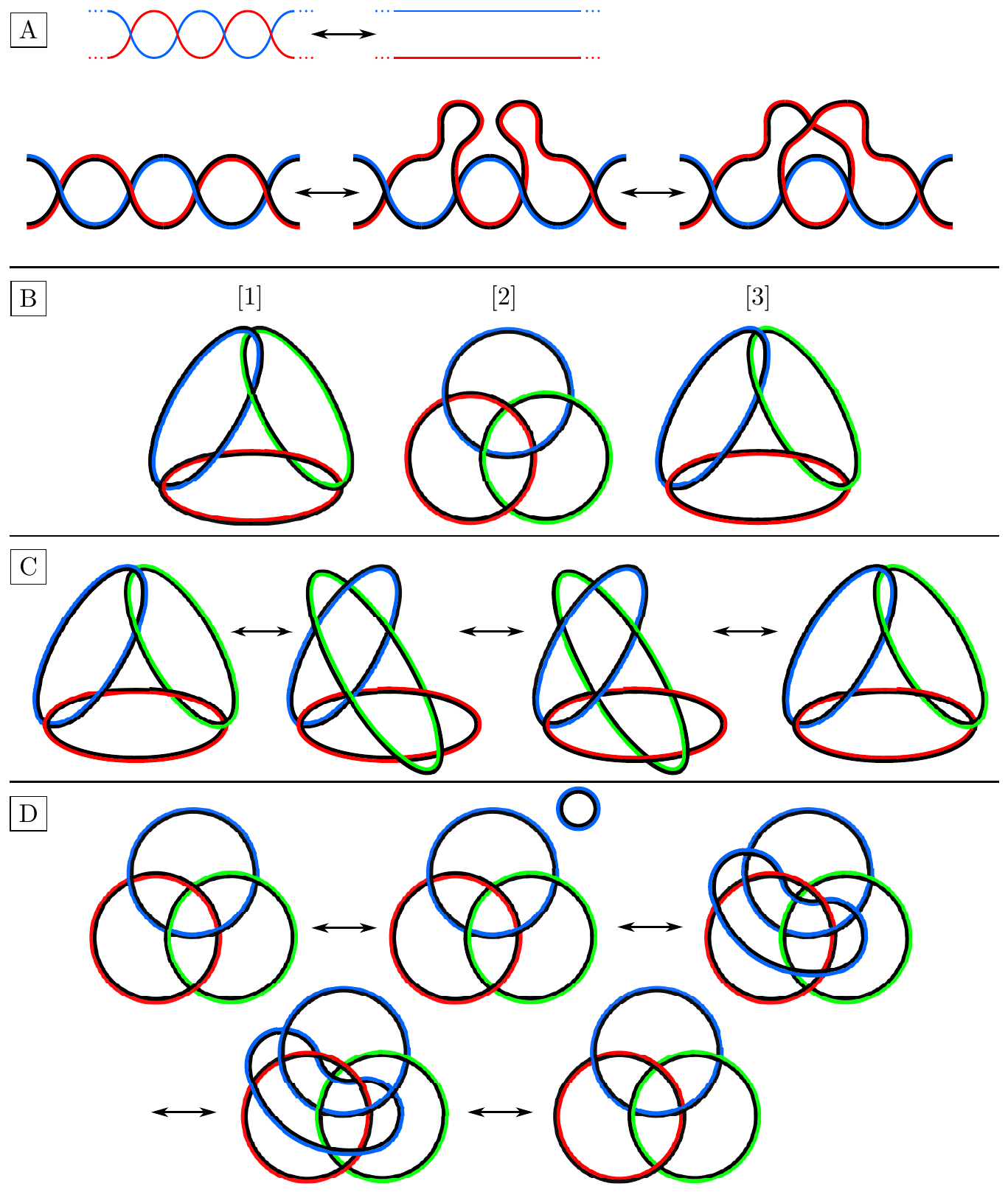}
\begin{flushleft}
\textbf{Extended Data Figure 5} | \textbf{The classification of $Q_8$-colored links.} 
(A) A surgery operation that can be employed to alter the linking number between two loops by an integer multiple of two. 
(B) The nontrivial classes of $Q_8$-colored links, and the values of the invariant $Q$ they correspond to. 
(C) In a looped chain configuration, it is possible to flip two of the three bicolorings by letting one of the loops go around the other two.
(D) In a Borromean rings configuration, it is possible to flip any single bicoloring. In the picture, the blue loop is split into two loops, one of which goes around the red loop, resulting in the flipping of the red bicoloring. The two blue loops cross each other during this process.
\end{flushleft}
\end{figure}

\begin{figure}
\includegraphics[scale=0.8]{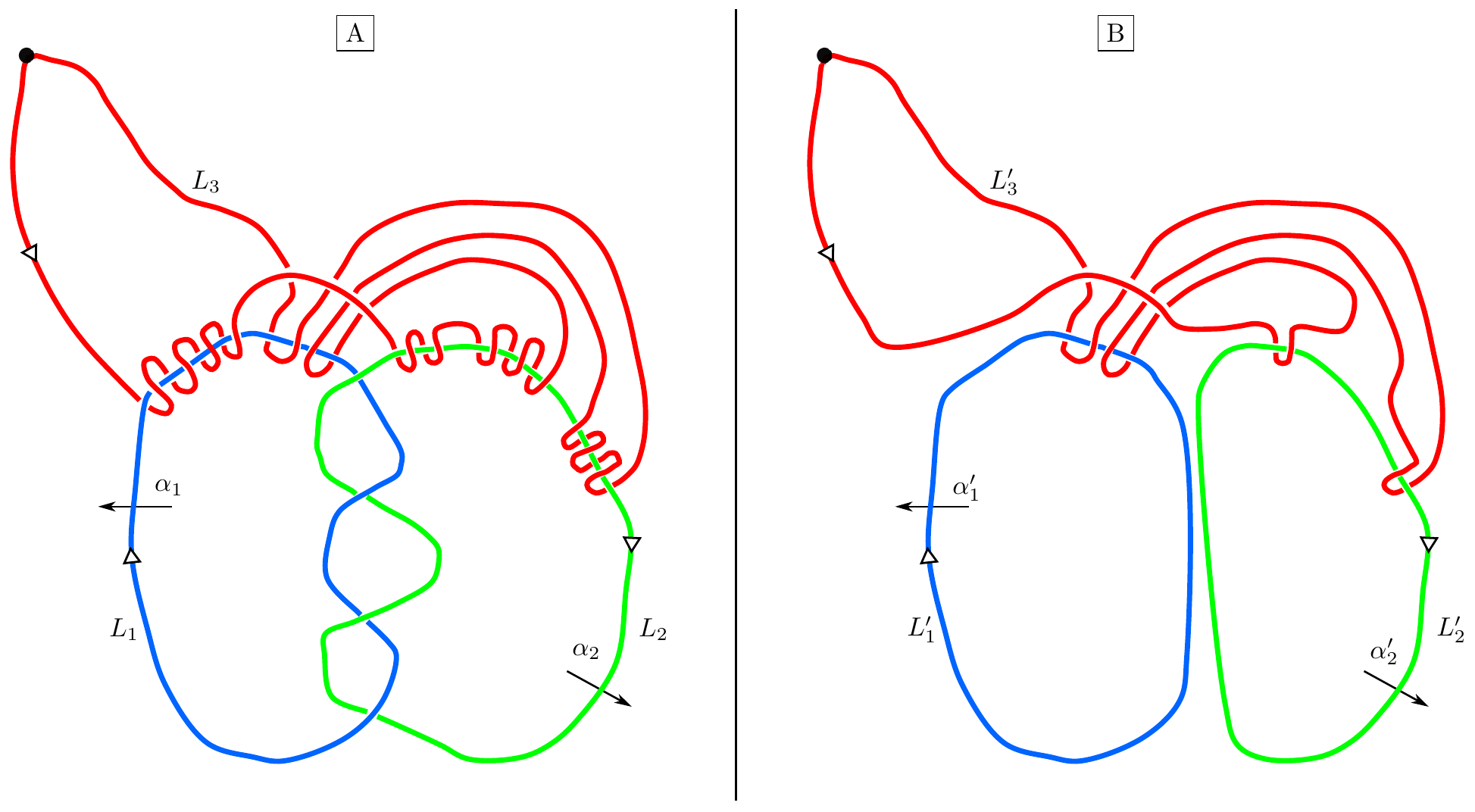}
\begin{flushleft}
\textbf{Extended Data Figure 6} | \textbf{Classification of three-component links.} 
The bicoloring is ignored for simplicity. (A) The linking number between $L_1$ and $L_2$ is $-2$, and the loop $L_3$ corresponds to an element of form $\alpha_1 \alpha_2^3 \alpha_1^{-1} \alpha_2^{-3} \alpha_2^{-2} \alpha_1^{-4}$ in $\pi_1(\Rb^3 \backslash (L_1 \cup L_2))$. (B) The same link after applying the surgery operation depicted in the Extended Data Figure 5 (A) several times. For the resulting link, the pairwise linking numbers are 0, and the triple linking number $\mu(123)$ is 1, so it is homotopic to the Borromean rings.
\end{flushleft}
\end{figure}

\begin{figure}
\includegraphics[scale=1.1]{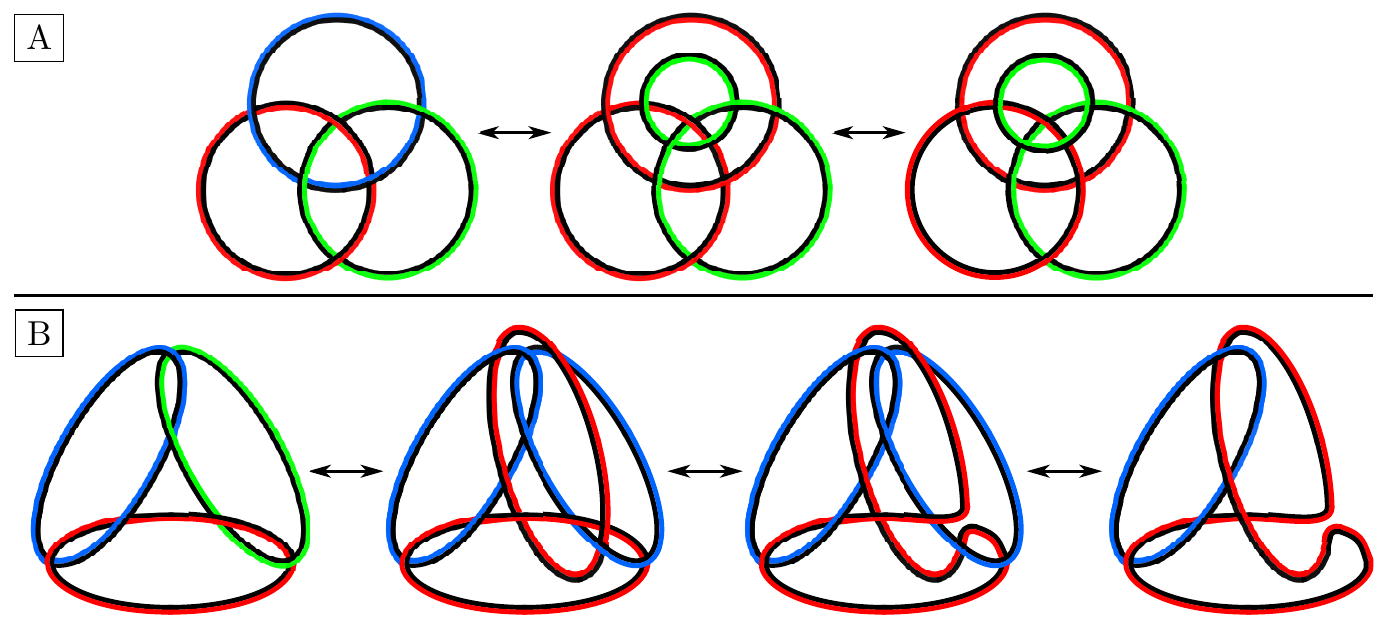}
\begin{flushleft}
\textbf{Extended Data Figure 7} | \textbf{Decay of nontrivial $Q_8$-colored links via vortex splitting.} 
Any $Q_8$-colored link may be unlinked using topologically allowed reconnections, strand crossings and vortex splittings. 
(A) Decay of the $Q_8$-colored Borromean ring configuration, corresponding to $Q = [2]$.
(B) Decay of the $Q_8$-colored looped chain of length three, corresponding to $Q = [3]$. Defects with $Q = [1]$ can be unlinked in a similar fashion.
\end{flushleft}
\end{figure}

\begin{figure}
\includegraphics[scale=1.2]{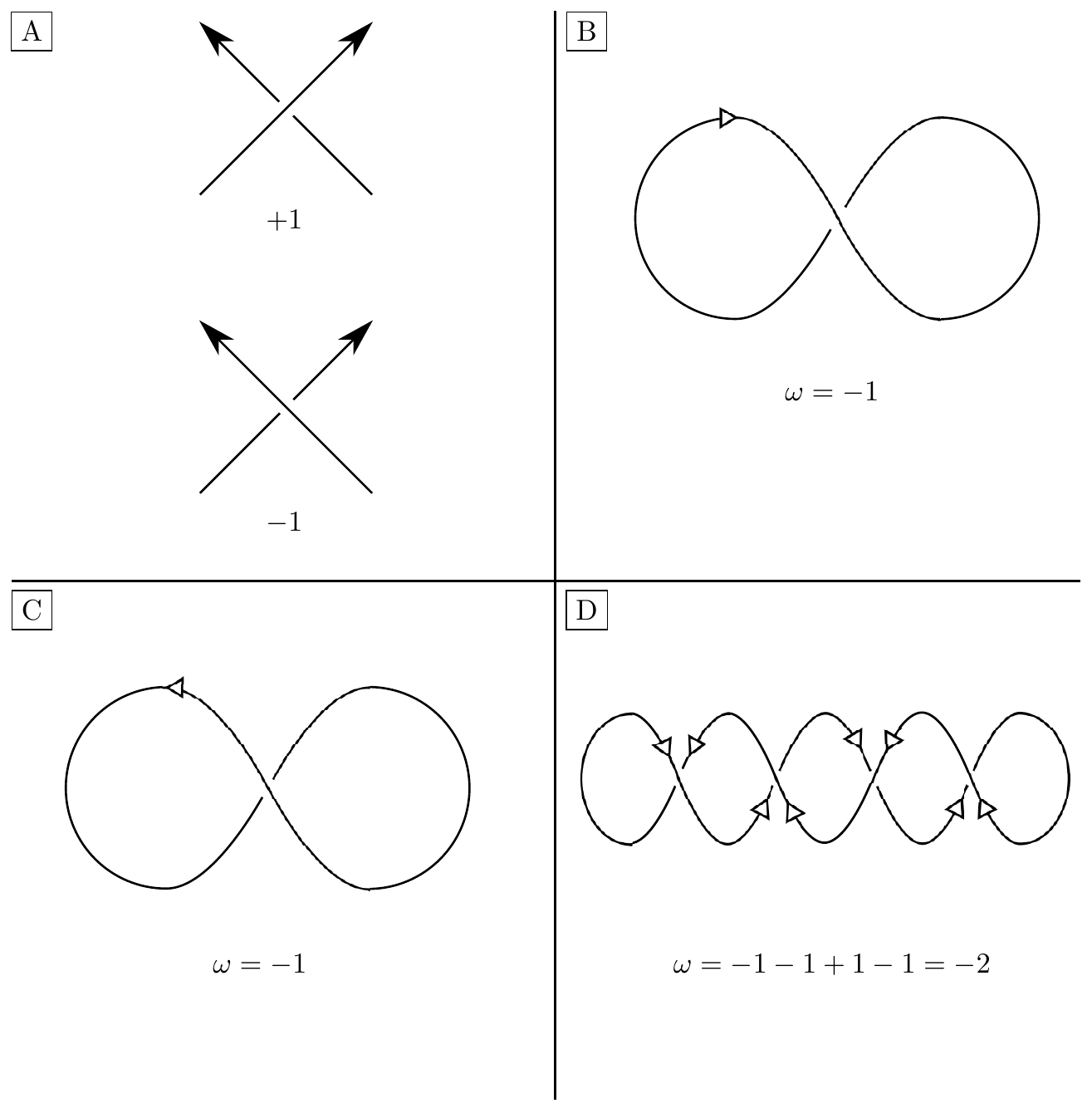}
\begin{flushleft}
\textbf{Extended Data Figure 8} | \textbf{Sign of a crossing and self writhe.} 
\textbf{a}, Right-hand rule for the sign of the oriented crossing.
\textbf{b}, \textbf{c}, The self-writhe of an oriented loop is the sum of the signs of the self-crossings in the diagram. In order to compute it, an orientation must be chosen for the loop; however, the end result does not depend on this choice.
\textbf{d}, Self writhe of a loop with four self-crossings.
\end{flushleft}
\end{figure}

\end{document}